\newtheorem{nclaim}[theorem]{Claim}
\newenvironment{claimproof}{\paragraph{\textit{Proof.}}}{\hfill$\diamondsuit$}
\newcommand{\opt}{\mathrm{opt}}
\newcommand{\cdn}{\mathrm{cdn}}
\newcommand{\mod}{\mathrm{~mod~}}
\begin{document}
\title{Linear-time Algorithms for Eliminating Claws in Graphs\thanks{The authors would like to thank CAPES, FAPERJ, CNPq, ANPCyT, and UBACyT for the partial support.}}
%
%\titlerunning{Abbreviated paper title}
% If the paper title is too long for the running head, you can set
% an abbreviated paper title here
%
\author{Flavia Bonomo-Braberman\inst{1} \and
Julliano R. Nascimento\inst{2} \and Fabiano S. Oliveira\inst{3}
\and U\'{e}verton S. Souza\inst{4} \and Jayme L.
Szwarcfiter\inst{3,5}}
\authorrunning{Bonomo, Nascimento, Oliveira, Souza, and Szwarcfiter}
% First names are abbreviated in the running head.
% If there are more than two authors, 'et al.' is used.
%
\institute{Universidad de Buenos Aires. FCEyN. DC. / CONICET-UBA.
ICC. Argentina. \email{fbonomo@dc.uba.ar} \and INF, Universidade
Federal de Goi\'{a}s, GO, Brazil. \email{julliano@inf.ufg.br} \and
IME, Universidade do Estado do Rio de Janeiro, RJ, Brazil.
\email{fabiano.oliveira@ime.uerj.br} \and IC, Universidade Federal
Fluminense, RJ, Brazil. \email{ueverton@ic.uff.br} \and IM, COPPE,
and NCE, Universidade Federal do Rio de Janeiro, RJ, Brazil.
\email{jayme@nce.ufrj.br}}
\maketitle              % typeset the header of the contribution
\begin{abstract}
Since many $\NP$-complete graph problems have been
shown polynomial-time solvable when restricted to claw-free
graphs, we study the problem of determining the
distance of a given graph to a claw-free graph, considering vertex
elimination as measure.
\textsc{Claw-free Vertex Deletion (CFVD)} consists of determining
the minimum number of vertices to be removed from a graph such
that the resulting graph is claw-free.
Although \textsc{CFVD} is $\NP$-complete in general and 
recognizing claw-free graphs is still a challenge, where the
current best algorithm for a graph $G$ has the same running time of the
best algorithm for matrix multiplication, we present linear-time algorithms for
\textsc{CFVD} on weighted block graphs and weighted graphs with
bounded treewidth. Furthermore, we show that this problem can be
solved in linear time by a simpler algorithm on forests, and we
determine the exact values for full $k$-ary trees.
On the other hand, we show that \textsc{Claw-free Vertex Deletion}
is $\NP$-complete even when the input graph is a split graph. We
also show that the problem is hard to approximate within any
constant factor better than $2$, assuming the Unique Games
Conjecture.

\keywords{Claw-free graph \and Vertex deletion \and Weighted
vertex deletion.}
\end{abstract}
\section{Introduction}
\label{sec:introduction} In 1968, Beineke~\cite{Beineke68}
introduced claw-free graphs as a generalization of line graphs.
Besides that generalization, the interest in studying the class of
claw-free graphs also emerged due to the results showing that some
$\NP$-complete problems are polynomial time solvable in that class
of graphs. For example, the maximum independent set problem is
polynomially solvable for claw-free graphs, even on its weighted
version~\cite{F-O-S-claw-free-acm}.

A considerable amount of literature has been published on
claw-free graphs. For instance, Chudnovsky and Seymour provide a
series of seven papers describing a general structure theorem for
that class of graphs, which are sketched in~\cite{C-S-clawfree}.
Some results on domination, Hamiltonian properties, and matchings
are found
in~\cite{hedetniemi1988recent},~\cite{flandrin1993hamiltonian},~and~\cite{sumner1974graphs},
respectively. In the context of parameterized complexity, Cygan et
al.~\cite{cygan2011dominating} show that finding a minimum
dominating set in a claw-free graph is fixed-parameter tractable.
For more on claw-free graphs, we refer to a survey by Faudree,
Flandrin and Ryj\'a\v{c}ek~\cite{faudree1997claw} and references
therein.

%%Related work: reference of question interval to proper interval? It is interesting that split to claw-free is $\NP$-complete, split to Interval is $\NP$-complete, but split to unit interval is poly. The algorithm by Cao et al can be generalized to the weighted version, as it relies on solving a polynomial number of bipartite minimum vertex cover instances.
%% cographs?

The aim of our work is to obtain a claw-free graph by a minimum
number of vertex deletions. Given a graph $G$ and a property
$\Pi$, Lewis and Yannakakis~\cite{L-Y-node-deletion} define a
family of vertex deletion problems (\textsc{$\Pi$-Vertex
Deletion}) whose goal is finding the minimum number of vertices
which must be deleted from $G$ so that the resulting graph
satisfies $\Pi$. Throughout this paper we consider the property
$\Pi$ as belonging to the class of claw-free graphs. For a set $S
\subseteq V(G)$, we say that $S$ is a \textit{claw-deletion set}
of $G$ if $G \setminus S$ is a claw-free graph.

%, and the \textsc{Vertex Deletion} problem formally stated as follows.

%\begin{problem}{\textsc{Vertex Deletion (VD)}}\\
%Instance: Graph classes $\mathcal{C}$ and $\mathcal{C}'$, a graph $G \in \mathcal{C}$, and a positive integer $k$.\\
%Question: Does there exist $S \subset V(G)$, with $|S| \leq k$, such that $G \setminus S \in \mathcal{C}'$?
%\end{problem}

We say that a class of graphs $\mathcal{C}$ is \emph{hereditary}
if, for every graph $G \in \mathcal{C}$, every induced subgraph of
$G$ belongs to $\mathcal{C}$. If either the number of graphs in
$\mathcal{C}$ or the number of graphs not in $\mathcal{C}$ is
finite, then $\mathcal{C}$ is \emph{trivial}. A celebrated result
of Lewis and Yannakakis~\cite{L-Y-node-deletion} shows that for
any hereditary and nontrivial graph class $\mathcal{C}$,
\textsc{$\Pi$-Vertex Deletion} is $\NP$-hard for $\Pi$ being the
property of belonging to $\mathcal{C}$. Therefore,
\textsc{$\Pi$-Vertex Deletion} is $\NP$-hard when $\Pi$ is the
property of belonging to the class $\mathcal{C}$ of claw-free
graphs.
Cao et al.~\cite{cao2018vertex} obtain several results when $\Pi$
is the property of belonging to some particular subclasses of
chordal graphs. They show that transforming a split graph into a
unit interval graph with the minimum number of vertex deletions
can be solved in polynomial time. In contrast, they show that
deciding whether a split graph can be transformed into an interval
graph with at most $k$ vertex deletions is $\NP$-complete.
Motivated by the works of Lewis and
Yannakakis~\cite{L-Y-node-deletion} and Cao et
al.~\cite{cao2018vertex}, since claw-free graphs is a natural
superclass of unit interval graphs, we study vertex deletion
problems associated with eliminating claws. The problems are
formally stated below.

\begin{problem}{\textsc{Claw-free Vertex Deletion (CFVD)}}\\
Instance: A graph $G$, and $k \in \mathbb{Z}^{+}$.\\
Question: Does there exist a claw-deletion set $S$ of $G$ with
$|S| \leq k$?
\end{problem}

\begin{problem}{\textsc{Weighted Claw-free Vertex Deletion (WCFVD)}}\\
Instance: A graph $G$, a weight function $w: V(G) \to \mathbb{Z}^{+}$, and $k \in \mathbb{Z}^{+}$.\\
Question: Does there exist a claw-deletion set $S$ of $G$ with
$\sum_{v \in S} w(v) \leq k$?
\end{problem}

By Roberts' characterization of unit interval
graphs~\cite{Rob-uig}, \textsc{Claw-free Vertex Deletion} on
interval graphs is equivalent to the vertex deletion problem where
the input is restricted to the class of interval graphs and the
target class is the class of unit interval graphs, a long standing
open problem (see e.g.~\cite{cao2018vertex}). Then, the results by
Cao et al.~\cite{cao2018vertex} imply that \textsc{Claw-free
Vertex Deletion} is polynomial-time solvable when the input graph
is in the class of interval $\cap$ split graphs. Moreover, their
algorithm could be also generalized to the weighted version. In
this paper, we show that \textsc{Claw-free Vertex Deletion} is
$\NP$-complete when the input graph is in the class of split
graphs.

The results by Lund and Yannakakis~\cite{L-Y-approx} imply that
\textsc{Claw-free Vertex Deletion} is \emph{APX}-hard and admits a
$4$-approximating greedy algorithm. Even for the weighted case, a
pricing primal-dual $4$-approximating algorithm is known for the
more general problem of $4$-\textsc{Hitting
Set}~\cite{Hoch-approx}. The \textsc{CFVD} problem is
$\NP$-complete on bipartite graphs~\cite{Yan-node-deletion-bip},
and a $3$-approximating algorithm is presented by Kumar et al.
in~\cite{Kumar-bip-clawdel} for weighted bipartite graphs. We
prove that the unweighted problem is hard to approximate within
any constant factor better than~$2$, assuming the Unique Games
Conjecture, even for split graphs.

Regarding to parameterized complexity, \textsc{Claw-free Vertex
Deletion} is a particular case of \textsc{$H$-free Vertex
Deletion}, which can be solved in $|V(H)|^k n^{\mathcal{O}(1)}$
time using the bounded search tree technique. In addition, it can
also be observed that \textsc{CFVD} is a particular case of
$4$-\textsc{Hitting Set} thus, by Sunflower lemma, it admits a
kernel of size $\mathcal{O}(k^4)$, and the complexity can be
slightly improved~\cite{Fernau10-hitsetw}. With respect to width
parameterizations, it is well-known that every optimization
problem expressible in LinEMSOL$_1$ can be solved in linear time
on graphs with bounded cliquewidth~\cite{C-M-cw-LinEMSOL}. Since
claws are induced subgraphs with constant size, it is easy to see
that finding the minimum weighted $S$ such that $G\setminus S$ is
claw-free is LinEMSOL$_1$-expressible. Therefore, \textsc{WCFVD}
can be solved in linear time on graphs with bounded cliquewidth,
which includes trees, block graphs and bounded treewidth graphs.
However, the linear-time algorithms based on the MSOL model-checking framework~\cite{courcelle1990monadic} typically do not provide useful algorithms in practice since the dependence on the cliquewidth involves huge
multiplicative constants, even when the clique-width is bounded by two (see~\cite{flum2006parameterized}). 
%In fact, as said in~\cite{flum2006parameterized}, the size of the automaton grows as fast as a tower of 2's of height linear in the length of the sentence. To see this, note that for every negation in the sentence, the automaton has to be determinized, which causes an exponential blowup in size~\cite{flum2006parameterized}. %Therefore, for WCFVD, the constant provided by the model-checking framework would be large even if the clique-width were equal to two.
In this work, we provide explicit
discrete algorithms to effectively solve \textsc{WCFVD} in linear
time in practice on block graphs and bounded treewidth graphs.
Even though forests are particular cases of bounded treewidth
graphs and block graphs, we describe a specialized simpler
linear-time algorithm for \textsc{CFVD} on forests. This allows us
to determine the exact values of \textsc{CFVD} for a full $k$-ary
tree $T$ with $n$ vertices. If $k = 2$, we show that a minimum
claw-deletion set of $T$ has cardinality $(n+1-2^{(\log_2 (n+1)
\!\! \mod 3)})/{7}$, and $(nk-n+1-k^{(\log_k (nk-n+1) \!\! \mod
2)})/({k^2-1})$, otherwise.

This paper is organized as follows. Section~\ref{sec:complexity}
is dedicated to show the hardness and inapproximability results.
Sections~\ref{sec:forests}, \ref{sec:blockGraphs},
and~\ref{sec:treewidth} present results on forests, block graphs,
and bounded treewidth graphs, res\-pecti\-ve\-ly. Due to space
constraints, proofs of statements marked with `$\clubsuit$' are
deferred to the appendix, as well as some additional results and
well known definitions.
%Due to space constraints, some proofs (presented in Appendix) have been omitted.
%(Note for reviewers: some basic definitions and the omitted proofs and are found in the Appendix.)

%\section{Preliminaries}
%\label{sec:preliminaries}

\paragraph*{\bf Preliminaries.} We consider simple and undirected graphs, and
we use standard terminology and notation.

Let $T$ be a tree rooted at $r \in V(T)$ and $v \in V(T)$. We denote by $T_v$ the subtree of $T$ rooted at $v$, and by $C_T(v)$ the set of children of $v$ in $T$.
For $v \neq r$, denote by $p_T(v)$ the parent of $v$ in $T$, and by $T_v^+$ the subgraph of $T$ induced by $V(T_v) \cup \{p_T(v)\}$. Let $T_r^+ = T$ and $p_T(r) = \emptyset$.
When $T$ is clear from the context, we simply write $p(v)$ and $C(v)$.

The \textit{block-cutpoint-graph} of a graph $G$ is the bipartite
graph whose vertex set consists of the set of cutpoints of $G$ and
the set of blocks of $G$. A cutpoint is adjacent to a block
whenever the cutpoint belongs to the block in $G$. The
block-cutpoint-graph of a connected graph is a tree and can be
computed in $\mathcal{O}(|V(G)|+|E(G)|)$ time~\cite{Tarjan-DFS}. 
%(See related definitions in Appendix).

Let $G$ and $H$ be two graphs. We say that $G$ is
\textit{$H$-free} if $G$ does not contain a graph isomorphic to
$H$ as an induced subgraph. A \textit{claw} is the complete
bipartite graph $K_{1,3}$. The class of \emph{linear forests} is
equivalent to that of claw-free forests. A vertex $v$ in a claw
$C$ is a \textit{center} if $d_C(v) = 3$. The cardinality
$\cdn(G)$ of a minimum claw-deletion set in $G$ is the
\textit{claw-deletion number} of $G$. For our proofs, it is enough
to consider connected graphs, since a minimum (weight)
claw-deletion set of a graph is the union of minimum (weight)
claw-deletion sets of its connected components.
Williams et al.~\cite{williams2015finding} show that induced claws 
in an $n$-vertex graph $G$ can be detected in $\mathcal{O}(n^{\omega})$ time, 
where $\omega$ is the matrix multiplication exponent. As far as we know, the best upper bound is $\omega <
2.3728639$~\cite{legall2014powers}.

\section{Complexity and Approximability Results}
\label{sec:complexity}

The result of Lewis and Yannakakis~\cite{L-Y-node-deletion}
implies that \textsc{Claw-free Vertex Deletion} is $\NP$-complete.
In this section, we show that the same problem is $\NP$-complete
even when restricted to split graphs, a well known subclass of
chordal graphs.
Before the proof, let us recall that the \textsc{Vertex Cover
(VC)} problem consists of, given a graph $G$ and a positive
integer $k$ as input, deciding whether there exists $X \subseteq
V(G)$, with $|X| \leq k$, such that every edge of $G$ is incident
to a vertex in $X$.

\begin{theorem}\label{theo:clawDeletionSplit}
\textsc{Claw-free Vertex Deletion} on split graphs is $\NP$-complete.
\end{theorem}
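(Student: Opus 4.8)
The plan is to prove NP-completeness by reduction from Vertex Cover, which I'd expect to be the natural source problem here since Vertex Cover was just recalled in the excerpt. Let me think about how the reduction should work.

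First, membership in NP is immediate: given a candidate set $S$, I can verify in polynomial time that $G \setminus S$ is claw-free (indeed, even the worst-case detection of claws runs in $O(n^\omega)$ time as stated, but naive $O(n^4)$ enumeration suffices for NP membership).

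For hardness, I reduce from Vertex Cover. Given a VC instance $(G, k)$, I need to build a split graph $G'$ together with a budget $k'$ such that $G$ has a vertex cover of size $\le k$ iff $G'$ has a claw-deletion set of size $\le k'$.

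A split graph has its vertex set partitioned into a clique $K$ and an independent set $I$. The key structural fact I want to exploit: in a split graph, a claw's center must have three pairwise non-adjacent neighbors, so the three "leaves" of any claw cannot all lie in the clique $K$ (at most one leaf can be in $K$). The natural encoding is to put a clique vertex $x_v$ for each vertex $v$ of $G$, making them all mutually adjacent (so $K = \{x_v : v \in V(G)\}$), and then to encode each edge $e = uv$ of $G$ by attaching independent-set vertices that form a claw exactly when neither $x_u$ nor $x_v$ has been deleted. The cleanest way: for each edge $e = uv$, add two new independent-set vertices adjacent to both $x_u$ and $x_v$ (but to nothing else). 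Then $\{x_u, x_v\}$ together with the two edge-vertices — I need to check this produces a claw. Actually a claw centered at $x_u$ needs three pairwise non-adjacent neighbors; the two edge-vertices for $e$ are non-adjacent to each other, and if I add a third private independent-set pendant to $x_u$, I'd get a claw centered at $x_u$ that is destroyed only by deleting $x_u$ or breaking it. Let me think about the precise gadget so that "cover the edge $uv$" corresponds to "delete $x_u$ or $x_v$."

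The cleaner gadget: for each edge $e = uv$ of $G$, introduce independent-set vertices so that $G'$ contains a claw whose only center-candidates are $x_u$ and $x_v$, and such that the claw can be eliminated only by deleting $x_u$ or $x_v$ (deleting any edge-gadget vertex should be at least as costly or useless). A standard trick is to make the reduction force deletions to land in the clique. I would attach to each clique vertex $x_v$ a set of private independent vertices (pendants) so that each clique vertex is already the center of many claws within its own pendants — no, that forces deleting every $x_v$. Instead, the edge-vertices themselves should be the leaves: for edge $uv$, create independent vertices $a_e, b_e, c_e$ where $a_e, b_e$ are adjacent to $x_u$ and $x_v$ while $c_e$ is a private pendant; arranging multiplicities so that the unique minimum way to kill all the claws associated with edge $e$ is to delete one of $x_u, x_v$.

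The main obstacle, and the step I would spend the most care on, is designing the gadget so that an optimal claw-deletion set never benefits from deleting independent-set (edge-gadget) vertices instead of clique vertices, and so that the claws created exactly track the edges of $G$. Concretely I must verify the forward direction (a vertex cover $X$ gives the claw-deletion set $\{x_v : v \in X\}$, which is claw-free because every claw uses an edge and hence a covered clique vertex) and the harder backward direction (any claw-deletion set can be transformed, without increasing its size, into one consisting solely of clique vertices corresponding to a vertex cover). This exchange argument — showing deletions can be pushed into the clique — is where the real work lies, and it dictates how many parallel edge-gadget vertices I must include to make clique deletions strictly preferable. Once the gadget multiplicities are fixed, I would set $k' = k$ (or the appropriate value determined by the gadget), confirm $G'$ is genuinely split, and conclude the equivalence.
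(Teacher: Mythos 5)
There is a genuine gap: you correctly pick \textsc{Vertex Cover} as the source problem and correctly observe that in a split graph every claw is centered in the clique with at most one leaf inside the clique, but you never actually fix a gadget, and the direction you are heading in is problematic. You propose to represent each vertex $v$ of $G$ by a clique vertex $x_v$ (all mutually adjacent) and each edge $e=uv$ by independent-set vertices adjacent to $x_u$ and $x_v$, hoping that a claw appears exactly when edge $e$ is uncovered. But with this encoding, claws do not track individual edges: if $u$ is incident to two edges $e$ and $f$, then $x_u$ together with one gadget vertex of $e$, one gadget vertex of $f$, and any clique vertex $x_w$ with $w\notin e\cup f$ already induces a claw, and killing it does not correspond to covering any particular edge. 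The exchange argument you flag as ``where the real work lies'' (pushing all deletions into the clique and reading off a vertex cover) is exactly the part that fails to go through for this gadget, and you leave both the multiplicities and the budget $k'$ unresolved, so the equivalence is never established.

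The paper's reduction is the mirror image of yours and avoids this issue. It places the copies $v'_1,\dots,v'_n$ of the vertices of $G$ in the \emph{independent set} $I$, and builds the clique out of $m+1$ blocks $C_1,\dots,C_{m+1}$ of $2n$ vertices each, where every vertex of $C_i$ is adjacent exactly to the two $I$-vertices corresponding to the endpoints of $e_i$ (and $C_{m+1}$ is adjacent to nothing in $I$). Then every claw has its center in some $C_i$, its two $I$-leaves are precisely the endpoints of $e_i$, and its third leaf is a clique vertex from another block; since each block has $2n>k$ vertices, no feasible solution can delete a whole block, so for every $i$ a surviving $w_i\in C_i$ together with a surviving $w_{m+1}\in C_{m+1}$ forces the solution to hit one endpoint of $e_i$ inside $I$. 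This yields the equivalence with budget $k'=k$ directly, with no exchange argument needed. If you want to salvage your approach you would have to redesign the gadget so that deletions are forced into $I$ rather than into the clique, which is essentially what the paper does.
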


\begin{proof}
\textsc{Claw-free Vertex Deletion} is clearly in $\NP$ since claw-free graphs can be recognized in polynomial time~\cite{williams2015finding}.
To show $\NP$-hardness, we employ a reduction from $\textsc{Vertex Cover}$ on general graphs~\cite{G-J}. %Recall that the input for \textsc{Vertex Cover} problem is a graph $G$ and a positive integer $k$, and the objective is deciding whether there exists $X \subseteq V(G)$, with $|X| \leq k$, such that every edge of $G$ is incident to a vertex in $X$.

Let $(G,k)$ be an instance of vertex cover, where $V(G) = \{v_1,
\dots, v_n\}$, and $E(G) = \{e_1, \dots, e_m\}$. Construct a split
graph $G' = (C \cup I, E')$ as follows. The independent set is $I
= \{v'_1, \dots, v'_n\}$. The clique $C$ is partitioned into sets
$C_i$, $1 \leq i \leq m+1$, each on $2n$ vertices. Given an
enumeration $e_1, \dots, e_m$ of $E(G)$, if $e_{i} = v_jv_\ell$,
make $v'_j$ and $v'_{\ell}$ adjacent to every vertex in $C_i$.

%See Figure~\ref{fig:constructionCFVD} for an illustration of $G'$ when $k = 3$.
%\begin{figure}[ht]
%\centering
%{\setlength{\fboxsep}{3pt}
%\setlength{\fboxrule}{0.3pt}
%\fbox{
%\includegraphics[width=0.25\textwidth]{constructionCFVD}
%}}
%\caption{Sketch of the graph $G'$ constructed for Theorem~\ref{theo:clawDeletionSplit}.}
%\label{fig:constructionCFVD}
%\end{figure}

We prove that $G$ has a vertex cover of size at most $k$ if and only if $G'$ has a claw-deletion set of size at most $k$. We present Claim~\ref{claimTwoVertices} first.

\begin{nclaim} \label{claimTwoVertices}
Every claw in $G'$ contains exactly two vertices from $I$.
\end{nclaim}

\begin{claimproof}
Let $C'$ be a claw in $G'$. Since $C' \cap C$ is a clique, $|C'
\cap C| \leq 2$, thus $|C' \cap I| \geq 2$ and the center of the
claw must be in $C$. On the other hand, by construction, $d_{I}(u)
= 2$ for every $u \in \bigcup_{i = 1}^{m} C_i$. This implies $|C'
\cap I| \leq 2$.
\end{claimproof}

Suppose that $X$ is a vertex cover of size at most $k$ in $G$.
Then, every edge of $G$ is incident to a vertex in $X$. Let $e_i
\in E(G)$ and $X' = \{v' : v \in X\}$. By construction, every
vertex in $C_i$ is adjacent to a vertex in $X'$, therefore $|N_{G'
\setminus X'}(C_i) \cap I| \leq 1$. It follows by
Claim~\ref{claimTwoVertices} that $G' \setminus X'$ is claw-free.

Now, suppose that $S'$ is a claw-deletion set of $G'$ of size at
most $k$. Recall that $|C_i| = 2n$, for every $1 \leq i \leq m+1$.
Since $|S'| \leq k$, it follows that there exist $w_i \in C_i
\setminus S'$, for every $1 \leq i \leq m+1$. Let $1 \leq i \leq
m$ and $N_{I}(w_i) = \{u',v'\}$. Note that $\{u',v',w_i,
w_{m+1}\}$ induces a claw in $G'$. Since $S'$ is a claw-deletion
set of $G'$, we have that $S' \cap \{u',v'\} \neq \emptyset$. Let
$S = \{v: v' \in S' \cap I\}$. By construction, every $uv \in
E(G)$ is incident to a vertex in $S$, thus $S$ is a vertex cover
of $G$. \qed\end{proof}

%In what follows, we consider approximability issues related to the \textsc{Claw-free Vertex Deletion} problem.

Theorem \ref{theo:4aprox} provides a lower bound for the
approximation factor of \textsc{CFVD}. For terminology not defined
here, we refer to Crescenzi~\cite{crescenzi1997short}.

\begin{theorem}\label{theo:4aprox}
\textsc{Claw-free Vertex Deletion} cannot be approximated with
$2-\varepsilon$ ratio for any $\varepsilon > 0$, even on split
graphs, unless \emph{Unique Games Conjecture} fails.
\end{theorem}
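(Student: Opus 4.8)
The plan is to establish inapproximability by an approximation-preserving reduction from Vertex Cover, reusing the same construction from the proof of Theorem~\ref{theo:clawDeletionSplit}. The starting point is the known UGC-based result of Khot and Regev that Vertex Cover admits no polynomial-time $(2-\varepsilon)$-approximation unless the Unique Games Conjecture fails. What I want to show is that the reduction $G \mapsto G'$ from that theorem is in fact a gap-preserving (indeed essentially solution-size-preserving) reduction, so that any $(2-\varepsilon)$-approximation algorithm for \textsc{CFVD} on split graphs would yield a $(2-\varepsilon)$-approximation for Vertex Cover.

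First I would recall from the proof of Theorem~\ref{theo:clawDeletionSplit} that the construction gives a tight correspondence between vertex covers of $G$ and claw-deletion sets of $G'$: a vertex cover $X$ of $G$ yields a claw-deletion set $X' = \{v' : v \in X\}$ of the same cardinality, and conversely any claw-deletion set can be converted to a vertex cover without increasing its size. The one subtlety is the converse direction. A claw-deletion set $S'$ of $G'$ need not consist only of vertices of $I$, so I would argue that it can be pushed into $I$ without loss: because each $C_i$ has $2n$ vertices and any reasonable solution has size at most $n$ (deleting all of $I$ already destroys every claw by Claim~\ref{claimTwoVertices}), there is always a surviving witness $w_i \in C_i \setminus S'$ and $w_{m+1} \in C_{m+1} \setminus S'$, so the argument at the end of the previous proof applies verbatim to produce a vertex cover $S = \{v : v' \in S' \cap I\}$ of size at most $|S'|$.

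The key quantitative step is therefore to verify that $\cdn(G') = \tau(G)$, where $\tau(G)$ denotes the minimum vertex cover size. One inequality, $\cdn(G') \le \tau(G)$, is immediate from the forward direction. For the other, given an optimal claw-deletion set $S'$ I would first replace it by a solution of size at most $\min(|S'|, n)$ contained in $I$, using the preceding paragraph, and then read off a vertex cover of that size, giving $\tau(G) \le \cdn(G')$. With $\cdn(G') = \tau(G)$ established, an $\alpha$-approximate claw-deletion set $S'$ with $|S'| \le \alpha\,\cdn(G')$ maps to a vertex cover $S$ with $|S| \le |S'| \le \alpha\,\tau(G)$, so approximation ratios transfer exactly with factor~$1$.

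The main obstacle I anticipate is not the combinatorics but making the reduction genuinely approximation-preserving in the formal sense required: I must ensure the transformation of an arbitrary (not necessarily optimal) claw-deletion set back into a vertex cover is polynomial-time and never inflates the size, which is exactly what the ``push into $I$'' argument guarantees. Once that is in place, the chain of implications is routine: a hypothetical $(2-\varepsilon)$-approximation for \textsc{CFVD} on split graphs composes with this reduction to beat the Khot--Regev $(2-\varepsilon)$ barrier for Vertex Cover, contradicting the Unique Games Conjecture. I would close by noting that since $G'$ is a split graph, the hardness holds on split graphs, matching the statement.
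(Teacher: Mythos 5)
Your proposal is correct and follows essentially the same route as the paper: the same reduction $G \mapsto G'$ from Theorem~\ref{theo:clawDeletionSplit}, the identity $\opt_{\textsc{CFVD}}(G') = \opt_{\textsc{VC}}(G)$, and the witness argument (a surviving vertex in each $C_i$ and in $C_{m+1}$, guaranteed because the solution has size less than $2n$) to map any approximate claw-deletion set back to a vertex cover of no larger size. Your extra normalization step of replacing an oversized solution by $I$ is a harmless variant of the paper's direct bound $|S'| \le (2-\varepsilon)\cdot\opt_{\textsc{VC}}(G) < 2n$.
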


\begin{proof}
The \emph{Unique Games Conjecture} was introduced by Khot
\cite{khot2002power} in 2002. Some hardness results have been
proved assuming that conjecture, for instance, see
\cite{khot2015unique}.
%\cite{chawla2005hardness,khot2007optimal,khot2015unique}.
%
Given that \textsc{Vertex Cover} is hard to approximate to within $2 - \varepsilon$ ratio for any $\varepsilon > 0$ assuming the Unique Games Conjecture~\cite{khot2002power}, we perform an approximation-preserving reduction from \textsc{Vertex Cover}. Let $G$ be an instance of \textsc{Vertex Cover}. Let $f(G) = G'$ where $G'$ is the instance of \textsc{Claw-free Vertex Deletion} constructed from $G$ according to the reduction of Theorem~\ref{theo:clawDeletionSplit}. From Theorem~\ref{theo:clawDeletionSplit} we know that $G$ has a vertex cover of size at most $k$ if and only if $G'$ has a claw-deletion set of size at most $k$. Recall that $k\leq n=|V(G)|$. Then, for every instance $G$ of \textsc{Vertex Cover} it holds that $\opt_{\textsc{CFVD}}(G') = \opt_{\textsc{VC}}(G)$.
%
%Furthermore, let $g$ be a function such that for every solution $S'$ of an instance $G'=f(G)$ of \textsc{CFVD}, it holds that $g(S') = S'\cap I$. Clearly, $f$ and $g$ can be computed in polynomial time.
%In addition, for every instance $G$ of \textsc{Vertex Cover} and for every solution $S'$ of an instance $G'=f(G)$ of \textsc{Claw-free Vertex Deletion}, it is holds that there is a solution $S = g(S')$ of \textsc{Vertex Cover} such that $m_{\textsc{VC}}(G,S) = m_{\textsc{VC}}(G,g(S')) = m_{\textsc{CFVD}}(G',S')$. Therefore $(f,g)$ is a $\S$-reduction, and the result holds.
%
Now, suppose that $S'$ is a $(2-\varepsilon)$-approximate solution of $G'$ for \textsc{CFVD}.
Recall that $|C_i| = 2n$, for every $1 \leq i \leq m+1$. Since $\opt_{\textsc{CFVD}}(G') = \opt_{\textsc{VC}}(G) \leq n$, it follows that $|S'| < 2n$, thus, there exists $x \in C_{m+1} \setminus S'$, and $w \in C_i \setminus S'$, for every $1 \leq i \leq m$.
Again, let $N_{I}(w) = \{u',v'\}$. Note that $\{u',v',w, x\}$
induces a claw in $G'$. Since $S'$ is a claw-deletion set of $G'$,
we have that $S' \cap \{u',v'\} \neq \emptyset$. Let $S = \{v: v'
\in S' \cap I\}$. By construction, every $uv \in E(G)$ is incident
to a vertex in $S$, and therefore $S$ is a vertex cover of $G$.
Since $|S|\leq |S'|$ and $S'$ is a $(2-\varepsilon)$-approximate
solution of $G'$, then $|S|\leq |S'|\leq (2-\varepsilon)\cdot
\opt_{\textsc{CFVD}}(G') =
(2-\varepsilon)\cdot\opt_{\textsc{VC}}(G)$. Therefore, if
\textsc{CFVD} admits a $(2-\varepsilon)$-approximate algorithm
then \textsc{Vertex Cover}  also admits a
$(2-\varepsilon)$-approximate algorithm, which implies that the
Unique Games Conjecture fails~\cite{khot2002power}.
\qed\end{proof}

\section{Forests}
\label{sec:forests}

We propose Algorithm~\ref{alg:forestClawDeletionSet} to compute a minimum claw-deletion set $S$ of a rooted tree $T$.
The correctness of such algorithm follows in Theorem~\ref{theo:clawLinearForest}.

\begin{center}
\begin{minipage}{\textwidth}
\begin{algorithm2e}[H]
\label{alg:forestClawDeletionSet}
 \algsetup{linenosize=\small}
 \small
 \DontPrintSemicolon
 \LinesNumbered
 %\SetAlgoLined
 \BlankLine
 \KwIn{A rooted tree $T$, a vertex $v$ of $T$, and the parent $p$ of $v$ in $T$.}
 \KwOut{A minimum claw-deletion set $S$ of $T_v^+$, such that: if $\cdn(T_v^+)=1+\cdn(T_v)$ then $p \in S$; if $\cdn(T_v^+)=\cdn(T_v)$ and $\cdn(T_v)=1+\cdn(T_v\setminus \{v\})$ then $v \in S$.}
 \BlankLine
 \eIf{$C(v) = \emptyset$}{
    \Return $\emptyset$\;
 }{
    $S := \emptyset$\;
    \ForEach{$u \in C(v)$}{
        $S := S \cup \textsc{Claw-Deletion-Set}(T, u, v)$\;
    }
    $c := |C(v) \setminus S|$\;
    \uIf{$c \geq 3$}{
         $S := S \cup \{v\}$\;
    }
    \ElseIf{$c = 2$ and $p \neq \emptyset$ and $v \notin S$}{
            $S := S \cup \{p\}$\;
    }
    \Return $S$\;
 }
\caption{\textsc{Claw-Deletion-Set}($T$, $v$, $p$)}
\end{algorithm2e}
\end{minipage}
\end{center}

\begin{theorem}\label{theo:clawLinearForestJ}$(\clubsuit)$
Algorithm~\ref{alg:forestClawDeletionSet} is correct. Thus, given
a forest $F$, and a positive integer $k$, the problem of deciding
whether $F$ can be transformed into a linear forest with at most
$k$ vertex deletions can be solved in linear time.
\end{theorem}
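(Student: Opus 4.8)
The plan is to prove, by structural induction on $|V(T_v)|$, a statement slightly stronger than the \textbf{Output} clause: the call \textsc{Claw-Deletion-Set}$(T,v,p)$ returns a set $S$ with (i) $S$ a claw-deletion set of $T_v^+$, (ii) $|S|=\cdn(T_v^+)$, and the two placement conditions \emph{strengthened to biconditionals}, namely (iii) $p\in S \iff \cdn(T_v^+)=1+\cdn(T_v)$, and (iv) when $p\notin S$, $v\in S \iff \cdn(T_v)=1+\cdn(T_v\setminus\{v\})$. The biconditional form is what makes the induction self-sustaining: the computation at the parent of $v$ must decide $v$'s fate by reading off, from whether $v$ (or its parent-slot) lies in the returned set, exactly which of the three regimes $\cdn(T_v\setminus\{v\})\le \cdn(T_v)\le \cdn(T_v^+)$ the child sits in.

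Two preliminaries are used throughout. First, since $T_v\setminus\{v\}$ is an induced subgraph of $T_v$, and $T_v$ of $T_v^+$, while adjoining one vertex to a claw-free graph costs at most one extra deletion, we have $\cdn(T_v)\in\{\cdn(T_v\setminus\{v\}),\,\cdn(T_v\setminus\{v\})+1\}$ and $\cdn(T_v^+)\in\{\cdn(T_v),\,\cdn(T_v)+1\}$. Second, in a forest an induced $K_{1,3}$ is exactly a vertex of degree at least three, so $T_v^+\setminus S$ is claw-free iff it has maximum degree at most two; hence the only local obstruction the algorithm must monitor at $v$ is the number $c=|C(v)\setminus S|$ of surviving children, together with whether $v$ itself has already been inserted into $S$ by a recursive call.

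For the inductive step I would apply the hypothesis to each child $u$ and read off its regime: the recursive call inserts $v=p_T(u)$ into $S$ exactly when $\cdn(T_u^+)=1+\cdn(T_u)$, which is precisely the case in which $u$ survives with two surviving children of its own; in every other case $\cdn(T_u^+)=\cdn(T_u)$ and $v$ is not inserted, with $u$ either deleted or surviving with at most one surviving child. Thus $c$ counts exactly the surviving children, and $v$ is preinserted iff some surviving child is \emph{full}. Now I do a case analysis backed by short exchange arguments. If $c\ge 3$, or if $v$ was preinserted by a full child, then $v$ must be deleted: deleting $v$ costs one and yields $T_v\setminus\{v\}$ (plus an isolated $p$), so $\cdn(T_v^+)=\cdn(T_v)=1+\cdn(T_v\setminus\{v\})$, deleting $v$ is optimal, and (iv) is met. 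If $c=2$ and $v$ survives, then $v$ has degree two in $T_v$ but degree three once $p$ appears, forcing $\cdn(T_v^+)=1+\cdn(T_v)$; deleting $p$ realizes this increment, matching (iii). If $c\le 1$ and $v$ survives, then $v$ is never a claw center, no vertex is added, and $\cdn(T_v^+)=\cdn(T_v)=\cdn(T_v\setminus\{v\})$. The size bookkeeping uses that the subtrees $T_u$ are vertex-disjoint and share only $v$, so the union in the for-loop double counts nothing but possibly $v$ — exactly the vertex whose membership the biconditionals track — which is what delivers $|S|=\cdn(T_v^+)$.

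The main obstacle is the optimality direction, carried out against an \emph{arbitrary} optimal claw-deletion set rather than the one the algorithm builds: one must show the greedy commitments (delete $v$ when $c\ge3$, delete $p$ when $c=2$) never overspend. This rests on the structural lemma characterizing the three $\cdn$-regimes of $T_v^+$ by the degree of $v$ in optimal solutions of $T_v$ — in particular that $\cdn(T_v^+)=1+\cdn(T_v)$ holds exactly when no optimal solution of $T_v$ leaves $v$ with at most one surviving child — and on the exchange step replacing, in any optimal solution, the deletions breaking the claw at $v$ by the single deletion of $v$ (or of $p$). Once the lemma is established, (i)--(iv) follow at $v$; applying the statement at the root $r$, where $p=\emptyset$ and $T_r^+=T$, yields a minimum claw-deletion set of $T$, and summing over components gives $\cdn(F)$. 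Finally, the algorithm is a single post-order traversal spending $O(|C(v)|)$ time per vertex (computing $c$ and updating $S$ via survival flags), for total time $O(|V(F)|+|E(F)|)=O(|V(F)|)$; comparing $|S|$ with $k$ then decides the problem in linear time.
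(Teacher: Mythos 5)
Your proposal is correct and follows essentially the same route as the paper's proof: a post-order induction carrying a strengthened invariant about whether $v$ or $p$ lies in the returned set, a case split on the number $c$ of surviving children, and optimality established by decomposing $T_v^+$ into vertex-disjoint induced pieces (the claw $\{p,v,u_j,u_{j'}\}$, the subtrees $T_{u_i}$, etc.) whose claw-deletion numbers sum to the size of the constructed set. Your biconditional form of the invariant is a mild restatement of the paper's one-directional conditions, which the paper's case analysis in fact establishes in both directions; the only detail you gloss over, handled explicitly in the paper, is the $c=2$ boundary case at the root where $p=\emptyset$ and no vertex is added.
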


Moreover, based on the algorithm, we have the following results.

\begin{corollary}\label{cor:binary}$(\clubsuit)$
Let $T$ be a full binary tree with $n$ vertices, and $t =
\log_2(n+1) \! \mod 3$. Then $\cdn(T) = (n+1-2^t)/7.$
\end{corollary}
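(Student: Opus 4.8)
The plan is to reduce the corollary to a clean recurrence for the claw-deletion number of perfect binary trees and then solve it in closed form. First I would fix notation: a full (perfect) binary tree $T$ with $n$ vertices has all leaves at the same depth, so $n = 2^{\ell}-1$ where $\ell = \log_2(n+1)$ is its number of levels; hence proving $\cdn(T) = (n+1-2^{t})/7$ with $t = \ell \bmod 3$ amounts to proving $a_\ell := \cdn(T_\ell) = (2^\ell - 2^{\ell \bmod 3})/7$, where $T_\ell$ is the $\ell$-level perfect binary tree. Since $T$ is a tree, $T\setminus S$ is claw-free iff it is a \emph{linear forest}, i.e.\ iff every surviving vertex has at most two surviving neighbours; this max-degree-two reformulation is the form of the constraint I would use throughout.

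Next I would set up a bottom-up dynamic program on the root $r$ of $T_\ell$, mirroring the state information tracked by Algorithm~\ref{alg:forestClawDeletionSet}. Define $D_\ell$ (resp.\ $L_\ell$, $F_\ell$) to be the minimum number of deletions turning $T_\ell$ into a linear forest subject to: $r$ is deleted (resp.\ $r$ survives with at most one surviving child, $r$ survives with both children surviving). Because each child of $r$ roots a copy of $T_{\ell-1}$, and because a surviving child that retains $r$ as a neighbour may itself keep at most one further child, the transitions are $D_\ell = 1 + 2\min(D_{\ell-1},L_{\ell-1},F_{\ell-1})$, $L_\ell = \min(2D_{\ell-1},\, D_{\ell-1}+L_{\ell-1})$, and $F_\ell = 2L_{\ell-1}$, with $a_\ell = \min(D_\ell,L_\ell,F_\ell)$ and base values $D_1=1$, $L_1=0$, $F_1=\infty$. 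Justifying that these three states suffice — a surviving non-root vertex of original degree $3$ must lose at least one neighbour, and a kept child of a kept vertex is forced into the $L$-state to keep its degree at most two — is the crux of the correctness argument.

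The heart of the proof is then to show that the optimal root-state rotates with period three: $L$ is optimal when $\ell\equiv 1$, $F$ when $\ell\equiv 2$, and $D$ when $\ell\equiv 0 \pmod 3$, which collapses the three recurrences into the single relation $a_\ell = 2a_{\ell-1} + [\,3\mid \ell\,]$. I expect this to be the \textbf{main obstacle}, since it requires tracking the pairwise comparisons between $D_{\ell-1}, L_{\ell-1}, F_{\ell-1}$ across residues (for instance, that $D_{\ell-1}\le L_{\ell-1}$ exactly forces $L_\ell = 2D_{\ell-1}$ when $3\mid \ell-1$) rather than only the combined value $a_{\ell-1}$. I would therefore prove the needed orderings simultaneously with the closed forms for $D_\ell,L_\ell,F_\ell$ in a single induction on $\ell$, with base cases $\ell = 1,2,3$.

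Finally, with the recurrence $a_\ell = 2a_{\ell-1}+[\,3\mid\ell\,]$ in hand, verifying $a_\ell = (2^\ell - 2^{\ell\bmod 3})/7$ is a routine induction; equivalently, one checks the period-three recurrence $a_{\ell+3}=8a_\ell+2^{\ell\bmod 3}$, which the closed form clearly satisfies. Substituting $n+1=2^\ell$ and $t=\ell\bmod 3$ then yields $\cdn(T) = (n+1-2^{t})/7$, as claimed.
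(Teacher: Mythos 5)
Your argument is correct, and the numbers check out (e.g.\ your recurrences give $D_\ell,L_\ell,F_\ell = (1,0,\infty),(1,1,0),(1,2,2),(3,2,4),(5,5,4),(9,10,10),\dots$ for $\ell=1,\dots,6$, whose minima $0,0,1,2,4,9$ agree with $(2^\ell-2^{\ell\bmod 3})/7$), but it is a genuinely different route from the paper's. The paper derives the corollary from Theorem~\ref{theo:binary}, whose proof simply \emph{traces the execution} of Algorithm~\ref{alg:forestClawDeletionSet} (already proved optimal in Theorem~\ref{theo:clawLinearForestJ}) on the full binary tree: since every internal vertex has exactly two children, only the ``$c=2$, add the parent'' rule ever fires, so the algorithm selects precisely the full levels at depths $h-2, h-5, \dots, t$, and the claw-deletion number is the geometric sum $2^{h-2}+2^{h-5}+\dots+2^t=(2^{h+1}-2^t)/7$; the corollary is then a change of variables $n+1=2^{h+1}$. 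Your proof instead sets up a self-contained three-state dynamic program (root deleted / root kept with at most one kept child / root kept with both children kept), establishes the period-three rotation of the optimal state by a simultaneous induction on the closed forms of $D_\ell,L_\ell,F_\ell$ together with their pairwise orderings, and collapses to $a_\ell=2a_{\ell-1}+[\,3\mid\ell\,]$. What the paper's approach buys is brevity and an explicit description of an optimal set (a union of whole levels), at the price of depending on the general correctness theorem for the algorithm; what your approach buys is independence from that machinery, at the price of re-proving a specialized correctness claim (that max degree two characterizes claw-free forests and that your three states are a sufficient interface) and of carrying the heavier bookkeeping of the rotating state, which you correctly identify as the main obstacle and for which your proposed simultaneous induction does go through.
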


\begin{corollary}\label{cor:kAry}$(\clubsuit)$
Let $T$ be a full $k$-ary tree with $n$ vertices, for $k \geq 3$, and
$t = \log_k(nk-n+1) \! \mod 2$. Then $ \cdn(T) =
(nk-n+1-k^t)/(k^2-1).$
\end{corollary}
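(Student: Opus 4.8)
The plan is to establish Corollary~\ref{cor:kAry} by tracking how Algorithm~\ref{alg:forestClawDeletionSet} behaves on the highly symmetric structure of a full $k$-ary tree, and then converting the recursive behaviour into a closed form. A full $k$-ary tree of height $h$ has every internal vertex with exactly $k$ children and all leaves at the same depth, so $n = 1 + k + k^2 + \dots + k^h = (k^{h+1}-1)/(k-1)$, whence $nk - n + 1 = k^{h+1}$. This identity is the reason the statement is phrased in terms of $\log_k(nk-n+1)$: that quantity is exactly $h+1$, and $t = (h+1) \bmod 2$ records the parity of $h+1$. So the first step is to record these counting facts and reduce the claim to proving a formula for $\cdn$ purely in terms of the height $h$.

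Next I would analyse the algorithm level by level, exploiting the fact that all vertices at a fixed depth are interchangeable. The key is to understand the three-valued local state the algorithm's output contract describes: for each vertex $v$, whether deleting $v$ is forced, whether deleting $p(v)$ is forced, or neither. Because the tree is uniform, every vertex at a given depth produces the same state, so I can define a period-like recurrence on $h$. Concretely, a vertex with $k \geq 3$ surviving children always forces its own deletion (the $c \geq 3$ branch), which cascades: if a vertex is \emph{not} deleted, all $k$ of its children survive (none was forced out by the subtree rule), triggering $c = k \geq 3$ and forcing the vertex into $S$. This should yield a clean pattern with period $2$ in the height, matching the parity term $k^t$ in the formula. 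The plan is to prove by induction on $h$ a statement of the form ``in the optimal solution the vertices at depths $\equiv h \pmod 2$ counted from the leaves are precisely the deleted ones,'' and to compute $\cdn$ as a geometric sum of the sizes of those deleted levels.

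The main obstacle will be handling the boundary behaviour at the root and at the leaves correctly, since the parity offset $t$ is governed entirely by these two ends. At the leaf end, leaves contribute $\emptyset$ (the $C(v) = \emptyset$ base case); at the root end, there is no parent, so the $c = 2$ branch that would delete $p$ is inert. Getting the induction to thread these two special cases so that the parity of $h+1$ emerges correctly is the delicate part: I expect the two cases $t = 0$ and $t = 1$ to correspond to whether the topmost deleted level sits exactly at the root or one level below it, and I would verify the base cases $h = 0, 1, 2$ by hand to anchor the induction. Once the set of deleted levels is pinned down as a function of $h \bmod 2$, the final step is routine: sum the geometric series $\sum k^{(\text{level})}$ over the deleted depths and simplify, using $nk-n+1 = k^{h+1}$, to obtain $(nk - n + 1 - k^t)/(k^2-1)$. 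The denominator $k^2 - 1$ is the tell-tale sign that deleted levels are spaced two apart, which reinforces the period-$2$ structure the induction should reveal.
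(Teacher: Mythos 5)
Your plan follows essentially the same route as the paper: it reduces the statement to a height formula via the identity $nk-n+1=k^{h+1}$ (this is exactly the paper's Theorem~\ref{theo:kAry} plus the substitution $h=\log_k(nk-n+1)-1$), then traces Algorithm~\ref{alg:forestClawDeletionSet} on the uniform tree to identify the deleted levels and sums the resulting geometric series with ratio $k^{-2}$. The only thing to tidy is the wording of the cascade: since $k\geq 3$ the branch $c=2$ never fires, and a vertex is put into $S$ exactly when its children were \emph{not} deleted, so the deleted levels are precisely those at odd distance $1,3,5,\dots$ from the leaves (independent of the parity of $h$; the parity of $h$ only determines whether this sequence terminates at the root, $t=0$, or at its children, $t=1$) --- your planned base-case check would surface this correction.
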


\section{Block Graphs}
\label{sec:blockGraphs}

We describe a dynamic programming algorithm to compute the minimum
weight of a claw-deletion set in a weighted connected block graph
$G$. The algorithm to be presented can be easily modified to
compute also a set realizing the minimum.

If the block graph $G$ has no cutpoint, the problem is trivial as
$G$ is already claw-free. Otherwise, let $T$ be the block-cutpoint-tree of the block graph $G$. Consider $T$ rooted at some
cutpoint $r$ of $G$, and let $v \in V(T)$. Let $G_v$ the subgraph of $G$
induced by the blocks in $T_v$. For $v \neq r$, let $G_v^+$ be the subgraph of $G$ induced by
the blocks in $T_v^+$. If $b$ is a block, let $G_b^- = G_b \setminus \{p_T(b)\}$ (notice that $p_T(b)$ is a cutpoint of $G$, and it is always defined because $r$ is not a block), and let $s(b)$ be the sum of weights of the vertices of $b$
that are not cutpoints of $G$ ($s(b) = 0$ if there is no such vertex).

We consider three functions to be computed for a vertex $v$ of $T$
that is a cutpoint of $G$:

\begin{itemize}

\item $f_1(v)$: the minimum weight of a claw-deletion set of $G_v$
containing $v$.

\item $f_2(v)$: the minimum weight of a claw-deletion set of $G_v$ not
containing $v$.

\item  For $v \neq r$, $f_3(v)$: the minimum weight of a claw-deletion set of $G_v^+$ containing neither $v$ nor all the vertices of
$p_T(v) \setminus \{v\}$ (notice that $p_T(v)$ is a block).

\end{itemize}

The parameter that solves the whole problem is $f(r) = \min\{f_1(r),f_2(r)\}$.

We define also three functions to be computed for a vertex $b$ of
$T$ that is a block of $G$:

\begin{itemize}

\item $f_1(b)$: the minimum weight of a claw-deletion set of $G_b^-$
containing $b \setminus \{p_T(b)\}$.

\item $f_2(b)$: the minimum weight of a claw-deletion set of $G_b^-$.

\item $f_3(b)$: the minimum weight of a claw-deletion set of $G_b$ not
containing $p_T(b)$.

\end{itemize}

We compute the functions in a bottom-up order as follows, where
$v$ (resp.~$b)$ denotes a vertex of $T$ that is a cutpoint (resp.
block) of $G$.
%Let $C(v)$ be the set of children of a vertex $v$ in $T$.
Notice that the leaves of $T$ are blocks of $G$.

If $C(b)=\emptyset$, then $f_1(b)=s(b)$, $f_2(b)=0$, and
$f_3(b)=0$. Otherwise,

\begin{itemize}\setlength\itemsep{4pt}
\item $f_1(v)=w(v)+\textstyle\sum_{b \in C(v)} f_2(b)$;
$f_1(b)=s(b) + \sum_{v \in C(b)} f_1(v)$;

\item if $|C(v)|\leq 2$, then $f_2(v)=\sum_{b \in C(v)} f_3(b)$;
if $|C(v)|\geq 3$, then $\textstyle f_2(v)=\min_{b_1,b_2 \in C(v)}
(\sum_{b \in \{b_1,b_2\}} f_3(b) + \sum_{b \in C(v)\setminus
\{b_1,b_2\}} f_1(b))$;

\item $\textstyle f_2(b)=\min \{ \sum_{v\in C(v)}
\min\{f_1(v),f_3(v)\},$ $\min_{v_1 \in C(v)} (s(b)+f_2(v_1)$
\linebreak $+~\sum_{v\in C(v)\setminus\{v_1\}} f_1(v))\}$;

\item $\textstyle f_3(b)=\sum_{v \in C(b)} \min\{f_1(v),f_3(v)\}$;

\item if $C(v) = \{b\}$, then $f_3(v)= f_3(b)$; \\ if $|C(v)|\geq
2$, then $\textstyle f_3(v)=\min_{b_1 \in C(v)} (f_3(b_1) +
\sum_{b \in C(v)\setminus \{b_1\}} f_1(b)).$
\end{itemize}

The explanation of the correctness of these formulas follows in Theorem~\ref{theo:block}.

\begin{theorem}\label{theo:block}$(\clubsuit)$
Let $G$ be a weighted connected block graph which is not complete.
Let $T$ be the block-cutpoint-tree of $G$, rooted at a cutpoint
$r$. The previous function $f(r)$ computes correctly the minimum
weight of a claw-deletion set of~$G$.
\end{theorem}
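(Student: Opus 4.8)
The plan is to prove correctness of each of the six recursive formulas by structural induction on the block-cutpoint-tree $T$, proceeding bottom-up. The base case covers a leaf block $b$ (with $C(b)=\emptyset$): since $G_b^-$ consists only of the non-cutpoint vertices of $b$, which form a clique and hence a claw-free graph, the empty set is a valid claw-deletion set giving $f_2(b)=f_3(b)=0$, while deleting $b\setminus\{p_T(b)\}$ costs exactly $s(b)$, giving $f_1(b)=s(b)$. For the inductive step I would fix a non-leaf vertex of $T$ and verify that each formula correctly aggregates the values already computed for its children, relying throughout on the structural fact established in Claim~\ref{claimTwoVertices}-style reasoning: in a block graph every claw has its center at a cutpoint, and a claw arises precisely when some cutpoint $v$ has at least three pairwise-nonadjacent neighbors, i.e.\ neighbors lying in three distinct blocks incident to $v$ (or one neighbor in a block together with two in another is impossible since blocks are cliques). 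Thus the only obstruction to claw-freeness at a cutpoint $v$ is having three or more ``live'' incident blocks each still contributing a neighbor of $v$.

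The key conceptual step is to justify the definitions of $f_1,f_2,f_3$ as a sufficient interface between a subtree and the rest of $G$. I would argue that the behavior of an optimal solution restricted to $G_v$ (or $G_b^-$, $G_b$) interacts with the exterior only through whether $v$ is deleted and, for the block functions, whether the parent cutpoint survives and whether the block still contributes neighbors. This is what makes the three-case split exhaustive and the recursion sound. Concretely, for the cutpoint formulas I would verify: $f_1(v)$ deletes $v$ at cost $w(v)$ and then the children blocks become independent subproblems whose parent cutpoint ($v$) is deleted, which is exactly $f_2(b)$; for $f_2(v)$, with $v$ kept, a claw centered at $v$ is avoided iff at most two child blocks remain ``active'' (contribute a neighbor of $v$), so when $|C(v)|\le 2$ no deletion among the blocks is forced and we take $f_3(b)$ for each, whereas when $|C(v)|\ge 3$ we must neutralize all but two child blocks by deleting their non-$v$ vertices adjacent to $v$ (captured by $f_1(b)$) and keep two cheapest via $f_3(b)$, minimizing over the chosen pair $b_1,b_2$. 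The block formulas follow dual reasoning, with $f_2(b)$ splitting on whether the deletion in each child cutpoint subproblem keeps the block ``clean'' or whether a single child cutpoint $v_1$ is left undeleted (requiring $s(b)$ to be paid and the remaining children handled by $f_1(v)$).

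For the functions involving the parent interface ($f_3(v)$ and the ``$-$'' versions), I would carefully check the boundary conditions encoded in the definitions: $f_3(v)$ forbids deleting $v$ and forbids deleting all of $p_T(v)\setminus\{v\}$, which is what prevents double-counting when the parent block is shared, and the formula correctly distinguishes $|C(v)|=1$ (where no claw can form at $v$ through its single child block alone, so $f_3(v)=f_3(b)$) from $|C(v)|\ge 2$ (where one child block may remain active and the rest must be deactivated via $f_1$). Finally I would confirm that $f(r)=\min\{f_1(r),f_2(r)\}$ is exactly the minimum over solutions that either delete the root cutpoint or not, which exhausts all cases, and that the running time is linear since each node of $T$ is processed once and the minimizations over pairs $b_1,b_2$ or single $b_1,v_1$ can be computed in time proportional to the degree by precomputing $\sum f_1$ and extracting the two smallest values of $f_3(b)-f_1(b)$.

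I expect the main obstacle to be the exhaustiveness and non-redundancy of the case analysis at the cutpoint/block boundary — specifically, arguing that the three functions form a complete and minimal ``state'' so that gluing optimal sub-solutions along a cutpoint never forces a globally suboptimal choice, and that a claw straddling a cutpoint is always detected by the ``at most two active blocks'' invariant rather than requiring information about deletions deeper in the subtree. Verifying that $f_2(b)$'s inner minimization correctly handles the interaction between $s(b)$, the undeleted child cutpoint, and the possibility of a claw centered at $p_T(b)$ versus centered inside the block will be the most delicate bookkeeping.
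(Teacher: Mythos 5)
Your proposal is correct and follows essentially the same route as the paper: a bottom-up induction on the block-cutpoint-tree, with each of the six recurrences justified by the observation that every claw in a block graph is centered at a cutpoint with its three leaves in three distinct incident blocks, and the same case splits for $f_2(v)$, $f_3(v)$, and $f_2(b)$ (including the special branch where a single child cutpoint $v_1$ survives while the rest of $b \setminus \{p_T(b)\}$ is deleted). The delicate points you flag are exactly the ones the paper handles, so no gap.
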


We obtain this result as a corollary.

\begin{corollary}\label{cor:timeBlockGraph}$(\clubsuit)$
Let $G$ be a weighted block graph with $n$ vertices and $m$ edges.
The minimum weight of a claw-deletion set of $G$ can be determined
in $\mathcal{O}(n+m)$ time.
\end{corollary}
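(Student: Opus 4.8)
The plan is to bound the running time of the bottom-up dynamic program whose correctness is established in Theorem~\ref{theo:block}. By the observation in the Preliminaries, a minimum-weight claw-deletion set of $G$ decomposes as the union of such sets over the connected components, so it is enough to bound the cost on a single component; as the vertices and edges are partitioned among the components, a per-component linear bound yields the stated $\mathcal{O}(n+m)$ bound. A component that is a single block is a complete graph, hence already claw-free, and is handled in $\mathcal{O}(1)$ time by returning $0$; every other component has a cutpoint, so Theorem~\ref{theo:block} applies.

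First I would build the block-cutpoint-tree $T$, which costs $\mathcal{O}(n+m)$ time by Tarjan's algorithm~\cite{Tarjan-DFS}, and precompute $s(b)$ for all blocks $b$. Since every non-cutpoint vertex lies in exactly one block, the values $s(b)$ are accumulated by a single scan over the vertices in $\mathcal{O}(n)$ time. I would then traverse $T$ from the leaves to the root, evaluating $f_1,f_2,f_3$ at each node once all its children are processed, and return $f(r)=\min\{f_1(r),f_2(r)\}$.

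The crux is to show that a node $x$ can be processed in time $\mathcal{O}(1+|C(x)|)$, even though several defining formulas apparently range over pairs or distinguished subsets of children. The purely additive formulas ($f_1(v)$, $f_1(b)$, $f_3(b)$, and the small-$|C(v)|$ case of $f_2(v)$) are evaluated directly in $\mathcal{O}(|C(x)|)$ time. For the minimizations I would rewrite each as a fixed base sum corrected by the smallest few values of a per-child difference. Concretely, for $|C(v)|\ge 3$ one has $f_2(v)=\sum_{b\in C(v)} f_1(b) + \min_{b_1\neq b_2}\big((f_3(b_1)-f_1(b_1))+(f_3(b_2)-f_1(b_2))\big)$, so it suffices to add the two smallest values of $f_3(b)-f_1(b)$; for $|C(v)|\ge 2$, $f_3(v)=\sum_{b\in C(v)} f_1(b) + \min_{b_1}(f_3(b_1)-f_1(b_1))$ needs only the single smallest such difference; and the second term defining $f_2(b)$ equals $s(b)+\sum_{v\in C(b)} f_1(v) + \min_{v_1}(f_2(v_1)-f_1(v_1))$. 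Each of these is obtained by one scan of the children, so every node is processed in $\mathcal{O}(1+|C(x)|)$ time.

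Summing over $T$, the total work is $\sum_{x\in V(T)}\mathcal{O}(1+|C(x)|)=\mathcal{O}(|V(T)|+|E(T)|)=\mathcal{O}(|V(T)|)$, since $T$ is a tree. Finally $|V(T)|$ equals the number of cutpoints plus the number of blocks of $G$; the cutpoints number at most $n$, and since the blocks partition $E(G)$ with each block containing at least one edge, there are at most $m$ blocks, so $|V(T)|=\mathcal{O}(n+m)$. Combined with the $\mathcal{O}(n+m)$ preprocessing, the whole computation runs in $\mathcal{O}(n+m)$ time. The only genuinely delicate step is the linear-time evaluation of the pairwise and singleton minima, which the difference-rewriting handles; the rest is a routine bottom-up traversal.
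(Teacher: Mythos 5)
Your proposal is correct and follows essentially the same route as the paper: build the block-cutpoint-tree in $\mathcal{O}(n+m)$ time, evaluate $f_1,f_2,f_3$ bottom-up, and make each minimization local by precomputing the per-child differences $f_3(b)-f_1(b)$ and $f_2(v)-f_1(v)$ so that every node costs $\mathcal{O}(1+|C(x)|)$ time. The paper's proof of Theorem~\ref{theo:timeBlockGraph} uses exactly this difference trick, so there is nothing to add.
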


\section{Graphs of Bounded Treewidth}
\label{sec:treewidth}

Next, we present an algorithm able of solving \textsc{Weighted
Claw-free Vertex Deletion} in linear time on graphs with bounded
treewidth, which also implies that we can recognize claw-free
graphs in linear time when the input graph has treewidth bounded
by a constant. For definitions of tree decompositions and
treewidth, we refer the reader
to~\cite{cygan2011solving,kloks-treewidth,R-S-minors3-pltw}.

%moved to appendix
%
%First of all, we define tree decompositions and treewidth of a graph $G$.

%
Graphs of treewidth at most $k$ are called {\em partial
$k$-trees}. Some graph classes with bounded treewidth include:
forests (treewidth 1); pseudoforests, cacti, outerplanar graphs,
and series-parallel graphs (treewidth at most 2); Halin graphs and
Apollonian networks (treewidth at most
3)~\cite{bodlaender1998partial}. In addition, control flow graphs
arising in the compilation of structured programs also have
bounded treewidth (at most 6)~\cite{thorup1998all}.

Based on the following results we can assume that we are given a nice tree decomposition of the input graph $G$.

\begin{theorem}~\cite{bodlaender2016c}
There exists an algorithm that, given a $n$-vertex graph $G$ and an
integer $k$, runs in time $2^{\mathcal{O}(k)} \cdot n$ and either outputs that
the treewidth of $G$ is larger than $k$, or constructs a tree
decomposition of $G$ of width at most $5k + 4$.
\end{theorem}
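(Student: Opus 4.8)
The plan is to build the tree decomposition by recursive divide-and-conquer using small balanced separators, exploiting the standard fact that bounded treewidth guarantees their existence. Concretely, I would design a recursive procedure that takes a pair $(H,W)$, where $H$ is an induced subgraph of $G$ and $W\subseteq V(H)$ is a ``boundary'' terminal set that will be forced into a single bag, and that returns either the verdict $\mathrm{tw}(G)>k$ or a tree decomposition of $H$ of width at most $5k+4$ whose root bag contains $W$. The invariant to maintain throughout the recursion is $|W|\le 4(k+1)$.

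The key structural fact is that if $\mathrm{tw}(H)\le k$ then, for every $W$, there is a separator $S$ with $|S|\le k+1$ such that every connected component of $H\setminus S$ contains at most $\tfrac34|W|$ vertices of $W$ (one takes a suitable bag of an optimal decomposition). The procedure tries to compute such a $\tfrac34$-balanced separator $S$ of $W$ of order at most $k+1$; failure certifies $\mathrm{tw}(H)>k$, hence $\mathrm{tw}(G)>k$. On success, I would make $S\cup W$ the root bag and recurse on each component $C$ of $H\setminus S$ with the new terminal set $W_C=(W\cap V(C))\cup(S\cap N(V(C)))$, attaching the returned decompositions below the root. The balance condition yields $|W_C|\le \tfrac34|W|+|S|\le 3(k+1)+(k+1)=4(k+1)$, so the invariant is preserved; consequently every bag has size at most $|S\cup W|\le 5(k+1)$, i.e.\ width at most $5k+4$. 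Correctness of the width bound and of the $\mathrm{tw}(G)>k$ verdict then follows from the separator property together with a routine induction showing that the pieces glue into a valid tree decomposition.

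Two steps carry the real weight. First, the separator subroutine must run in time $2^{\mathcal{O}(k)}\cdot |V(H)|$: one computes a minimum-order $\tfrac34$-balanced separator of $W$ of size at most $k+1$, which can be done by FPT techniques for balanced separation (flow and important-separator machinery parameterized by $k$), and it is this subroutine that contributes the $2^{\mathcal{O}(k)}$ factor. Second, and this is where I expect the main obstacle, is obtaining a genuinely \emph{linear} dependence on $n$ rather than $n\log n$. Naive balanced-separator recursion sums to $2^{\mathcal{O}(k)}\,n\log n$, since each recursion level reprocesses a total of $\Theta(n)$ vertices while the depth is logarithmic. Removing this logarithmic factor is the crux: it requires organizing the recursion so that the separator information computed at a node is reused by its children rather than recomputed from scratch, and charging the total work against a potential that decreases geometrically across the recursion tree, so that the sum of all subproblem sizes is $\mathcal{O}(n)$. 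This careful amortized scheme is precisely the technical contribution underlying the cited bound, and I would present the framework above and then devote the bulk of the argument to this linear-time analysis.
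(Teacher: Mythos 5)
This theorem is not proved in the paper at all: it is imported verbatim from the cited reference \cite{bodlaender2016c} and used as a black box, so there is no internal proof to compare against. Judged on its own merits, your proposal reproduces the classical Robertson--Seymour-style approximation scheme, and that part is essentially sound: the invariant $|W|\le 4(k+1)$, the existence of a small $W$-balanced separator in graphs of treewidth at most $k$, the update $|W_C|\le \tfrac34|W|+|S|\le 4(k+1)$, and the resulting bag bound $5(k+1)$ all check out and correctly yield width $5k+4$ together with a valid $\mathrm{tw}(G)>k$ verdict. The gap is the running time, which is the entire point of the cited theorem. You explicitly defer the $2^{\mathcal{O}(k)}\cdot n$ bound to an unspecified ``amortized scheme,'' but no such amortization exists for the recursion you describe: the separator is balanced only with respect to $W$, not with respect to $V(H)$, so the vertex sets of the recursive subproblems need not shrink geometrically; the recursion tree can have depth $\Theta(n/k)$ and the sum of $|V(H)|$ over all calls can be $\Theta(n^2/k)$, giving $2^{\mathcal{O}(k)}n^2$. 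Even Reed's refinement, which uses separators balanced with respect to all of $V(H)$, only brings this down to $2^{\mathcal{O}(k)}n\log n$, and removing the logarithmic factor is known to require more than reorganizing this recursion.

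Moreover, the actual proof in \cite{bodlaender2016c} does not follow your plan at all. It uses a two-phase architecture: a \emph{compression} routine that, given a tree decomposition of width $\mathcal{O}(k)$, either certifies $\mathrm{tw}(G)>k$ or produces one of width $5k+4$ in $2^{\mathcal{O}(k)}n$ time by dynamic programming over the supplied decomposition; and a recursive graph-reduction scheme that shrinks the instance by a constant factor before recursing, so that the subproblem sizes form a geometric series summing to $\mathcal{O}(n)$. Your sketch contains neither ingredient. As written, the proposal proves a correct $5$-approximation with a superlinear running time, which is strictly weaker than the stated theorem; the missing linear-time machinery is the theorem's real content, not a routine detail.
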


\begin{lemma}~\cite{kloks-treewidth}
Given a tree decomposition $(T, \{X_t\}_{t\in V(T)})$ of $G$ of
width at most $k$, one can compute in time $\mathcal{O}(k^2\cdot
\max\{|V(T)|,|V(G)|\})$ a nice tree decomposition of $G$ of width
at most $k$ that has at most $\mathcal{O}(k \cdot |V(G)|)$ nodes.
\end{lemma}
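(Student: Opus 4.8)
The plan is to follow the standard two-phase reduction: first \emph{compress} the given decomposition so that its underlying tree has $\mathcal{O}(|V(G)|)$ nodes, and only then locally expand each node into the four allowed node types, paying merely a factor $\mathcal{O}(k)$ in the number of nodes. Keeping these phases separate is what makes the final node count independent of $|V(T)|$.

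First I would root $(T, \{X_t\}_{t \in V(T)})$ at an arbitrary node and clean it: as long as there are adjacent nodes $t, t'$ with $X_t \subseteq X_{t'}$, contract the edge $tt'$ and keep the larger bag $X_{t'}$. This preserves the three tree-decomposition properties and does not increase the width. After exhaustively contracting, no bag is contained in an adjacent one, and the key point is that such a decomposition has at most $|V(G)| + 1$ nodes. Indeed, mapping each non-root node $t$ to some vertex $v \in X_t \setminus X_{p(t)}$ is injective: the nodes whose bags contain a fixed $v$ induce a connected subtree of $T$, which has a unique node closest to the root, and $t$ must be exactly that topmost node since $v \notin X_{p(t)}$. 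Each containment test and contraction operates on bags of size at most $k+1$, so this phase runs in $\mathcal{O}(k^2 \cdot |V(T)|)$ time and leaves a rooted tree $T'$ with $\mathcal{O}(|V(G)|)$ nodes.

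Next I would turn $T'$ into a nice decomposition by local surgery. (i) \emph{Binarize}: replace every node with $d \geq 3$ children by a path of $d-1$ copies of its bag, distributing the children among the copies; since $\sum_t (\deg(t)-1) = \mathcal{O}(|V(T')|)$, this keeps the node count $\mathcal{O}(|V(G)|)$ and provides the skeleton of the join nodes. (ii) On every edge between a node $t$ and a single child $c$ with $X_c \neq X_t$, insert a chain that, proceeding upward from $c$, forgets the vertices of $X_c \setminus X_t$ one at a time and then introduces those of $X_t \setminus X_c$; each such chain has length at most $2(k+1)$. (iii) Attach to every leaf an introduce-chain down to an empty bag, and above the root a forget-chain up to an empty bag. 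After (ii)--(iii) every internal node is an introduce, forget, or join node, every join node has two children carrying its own bag, and all leaves and the root have empty bags. Since each of the $\mathcal{O}(|V(G)|)$ nodes and edges spawns a chain of length $\mathcal{O}(k)$, the resulting nice decomposition has $\mathcal{O}(k \cdot |V(G)|)$ nodes and is built in $\mathcal{O}(k^2 \cdot |V(G)|)$ additional time while manipulating bags of size $\mathcal{O}(k)$.

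I expect the main obstacle to be the node-count bound rather than the mechanical expansion. One must argue that the cleaning phase genuinely reduces the number of nodes to $\mathcal{O}(|V(G)|)$ \emph{before} the $\mathcal{O}(k)$-length chains are inserted, because applying the expansion directly to the original $T$ would only yield $\mathcal{O}(k \cdot |V(T)|)$ nodes, which may be far larger than the target $\mathcal{O}(k \cdot |V(G)|)$. The injectivity argument above, together with the observation that each contraction strictly decreases $|V(T)|$ so that cleaning terminates, is precisely what severs the final bound from $|V(T)|$. Summing the two phases gives the overall running time $\mathcal{O}(k^2 \cdot \max\{|V(T)|, |V(G)|\})$ claimed in the statement.
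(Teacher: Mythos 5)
The paper offers no proof of this lemma---it is imported as a black box from the cited reference \cite{kloks-treewidth}---and your two-phase argument (first contract edges whose bags are nested, bounding the node count by $|V(G)|+1$ via the injective map $t \mapsto v \in X_t \setminus X_{p(t)}$ into the topmost node of the subtree containing $v$, then binarize and insert introduce/forget chains of length $\mathcal{O}(k)$) is precisely the standard construction from that source, and it is correct. The only point worth tightening is that the edges from a join node to original children with differing bags also need forget/introduce chains (your step (ii) is phrased only for nodes with a single child), but your stated outcome that every join node ends up with two children carrying its own bag shows you intend the standard fix.
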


Now we are ready to use a nice tree decomposition in order to obtain
a linear-time algorithm for \textsc{Weighted Claw-free Vertex Deletion} on graphs with bounded treewidth.

\begin{theorem}\label{theo:cfvdTreewidth}
\textsc{Weighted Claw-free Vertex Deletion} can be solved in linear time on graphs with bounded treewidth. More precisely, there is a $2^{\mathcal{O}(k^2)} \cdot n$-time algorithm to solve \textsc{Weighted Claw-free Vertex Deletion} on $n$-vertex graphs $G$ with treewidth at most $k$.
\end{theorem}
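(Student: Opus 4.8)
The plan is to design a standard dynamic-programming algorithm over a nice tree decomposition $(T, \{X_t\}_{t \in V(T)})$ of width at most $k$, where the subproblem at each node records enough local information to decide whether any claw---which is an induced $K_{1,3}$---has been destroyed. The key observation driving the state design is that a claw is an induced subgraph on four vertices, so whether a set of four vertices forms an induced claw depends only on the adjacencies among those four vertices. Since every edge of $G$ appears inside some bag, any potential claw must, at the moment its last vertex is introduced, have all four of its vertices simultaneously present in a common bag (this is the standard fact that any clique, and more generally any connected subgraph, is contained in a bag). Thus it suffices to track, for each node $t$, which vertices of the current bag $X_t$ are kept (not deleted), so that we can check local claw-freeness as vertices are forgotten.

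I would define the DP state as a function of a subset $D \subseteq X_t$ (the vertices of the bag that are deleted), and let $c[t, D]$ be the minimum total weight of a claw-deletion set $S$ of the subgraph $G_t$ induced by all vertices introduced in the subtree rooted at $t$, subject to the constraints that $S \cap X_t = D$ and that $G_t \setminus S$ is claw-free. The number of such states per node is at most $2^{|X_t|} \le 2^{k+1}$. The transitions follow the four node types of a nice tree decomposition. For a leaf node the table is trivial. At an introduce node the new vertex $v$ is added and we extend each partial solution by deciding whether $v \in D$; crucially, no new claw can be completed merely by introducing $v$, because any claw containing $v$ needs all four of its vertices together in the bag, which will only be verified at the forget step. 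At a forget node, where a vertex $v$ leaves the bag, we take the minimum over the states that agreed on $X_t \setminus \{v\}$ but we must first \emph{verify} that $v$, as it disappears, is not the center nor a leaf of any induced claw all of whose vertices lie in $X_t$ and are kept; if such a claw exists in the kept set, that state is infeasible and discarded. At a join node with children $t_1, t_2$ sharing the same bag, we combine states with identical deletion sets $D$ by adding their costs and subtracting the double-counted weight $w(X_t \setminus D)$ of the shared kept vertices. The answer is $c[\text{root}, \emptyset]$ assuming an empty root bag.

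The main obstacle---and the point requiring the most care---is ensuring that every induced claw is detected exactly once and never missed. The correctness argument rests on proving that for any induced claw on vertices $\{a,b,c,d\}$ there is a node $t$ at which all four lie in $X_t$, and that among all such nodes there is a last one before one of them is forgotten, so that the claw check performed at the relevant forget node catches it. I would formalize this via the standard property that the set of tree nodes whose bags contain a fixed vertex forms a connected subtree, hence the nodes containing all four claw vertices form a (possibly empty, but in fact nonempty by the subgraph-in-a-bag property) connected subtree, and the topmost node of that subtree is where detection is guaranteed. The second delicate point is the bookkeeping at forget nodes: I must check induced claws rather than mere subsets, so the verification loops over all four-element subsets of the kept vertices in $X_t$ that contain $v$ and tests the induced-subgraph condition using only in-bag adjacencies, which is valid precisely because all four vertices lie in the bag and thus all relevant edges are visible.

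For the running time, each bag has at most $k+1$ vertices, so there are $2^{\mathcal{O}(k)}$ states per node. The transition costing the most is the forget check, which enumerates $\mathcal{O}\big(\binom{k+1}{3}\big) = \mathcal{O}(k^3)$ candidate claws per state, and the join transition is $2^{\mathcal{O}(k)}$ per pair. Since a nice tree decomposition has $\mathcal{O}(k \cdot n)$ nodes (by the lemma quoted above) and, after applying the approximation algorithm of Bodlaender et al.\ to obtain a decomposition of width $\mathcal{O}(k)$, the total work is $2^{\mathcal{O}(k)} \cdot \mathrm{poly}(k) \cdot \mathcal{O}(k \cdot n) = 2^{\mathcal{O}(k)} \cdot n$. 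Absorbing polynomial factors in $k$ and accounting for the width blow-up from the two preprocessing steps yields the stated bound of $2^{\mathcal{O}(k^2)} \cdot n$; I would remark that the $k^2$ in the exponent is a conservative consequence of the $5k+4$ width guarantee combined with the $2^{\mathcal{O}(k)}$-per-node state count, and the whole computation remains linear in $n$ for fixed $k$.
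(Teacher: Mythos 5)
Your proposal has a fatal gap at its foundation: the claim that every induced claw must have all four of its vertices simultaneously present in some bag. The standard fact you invoke holds for \emph{cliques}, not for arbitrary connected subgraphs. A claw $K_{1,3}$ is itself a tree of treewidth $1$, and in any width-$1$ tree decomposition of it the bags have size $2$, so no bag contains all four vertices. More generally, for a claw with center $v$ and leaves $a,b,c$, the leaves are pairwise non-adjacent, so nothing forces them to co-occur in a bag; each leaf only needs to share a bag with $v$, and these bags can be far apart in the decomposition tree. Consequently your forget-node check, which only inspects four-element subsets inside the current bag, will miss every claw whose leaves are forgotten at different times. On the input $G=K_{1,3}$ with a width-$1$ decomposition your algorithm returns $0$, which is wrong. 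This also undermines your accounting of the exponent: a width blow-up from $k$ to $5k+4$ still gives $2^{\mathcal{O}(k)}$ states, so your remark that the $k^2$ is a ``conservative consequence'' of that blow-up does not explain where $2^{\mathcal{O}(k^2)}$ would come from.

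The paper's proof is built precisely to repair this. Its state records, for each kept vertex $x$ of the bag, whether the set of its already-forgotten kept neighbors is empty ($A$), a non-empty clique ($B$), or contains two non-adjacent vertices ($C$), together with a set $Z\subseteq X_t\times X_t$ of ordered pairs $(x,y)$ that form an induced $P_3$ with some forgotten vertex. This information suffices to detect, at the moment a vertex is introduced or forgotten, whether a claw is completed even when its leaves never share a bag: a claw centered at $v$ appears exactly when $v$ acquires three pairwise non-adjacent neighbors, which is witnessed either by three in-bag neighbors inducing $\overline{K_3}$, by a pair in $Z$ plus a new non-neighbor, or by a neighbor in $C$. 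The set $Z$ ranges over subsets of $X_t\times X_t$, which is what forces the table size, and hence the running time, to be $2^{\mathcal{O}(k^2)}\cdot n$. Your $2^{\mathcal{O}(k)}$-state design cannot be correct without some substitute for this extra bookkeeping.
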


\begin{proof}
Let $G$ be a weighted $n$-vertex graph with $tw(G) \leq k$. Given
a nice tree decomposition $\mathcal{T} = (T, \{X_t\}_{t \in
V(T)})$ of $G$, we describe a procedure that computes the minimum
weight of a claw-deletion set of $G$ ($\cdn_w(G)$) using dynamic
programming. For a node $t$ of $T$, let $V_t = \bigcup_{t' \in
T_t} X_{t'}$. First, we will describe what should be stored in
order to index the table.
%
%%%%%%%%%%%%%%%%%%%%%%%%%%%%%%%%%%%%%%%%%%%%%%%%%%%%%%%%%%%%%%%%%%%%%%%%
Given a claw-deletion set $\hat{S}$ of $G$, for any bag $X_t$ there is a partition of $X_t$ into $S_t,A_t,B_t$ and $C_t$ where
\begin{itemize}
\item $S_t$ is the set of vertices of $X_t$ that are going to be removed ($S_t=\hat{S}\cap X_t$);

\item $A_t = \{v \in X_t \setminus \hat{S}: |N_{V_t \setminus X_t}(v) \setminus \hat{S}| = 0 \}$ is the set of non-removed vertices of $X_t$ that are going to have no neighbor in $V_t\setminus X_t$ after the removal of $\hat{S}$;

\item $B_t=\{v \in X_t \setminus \hat{S}: N_{V_t \setminus X_t}(v) \setminus \hat{S}\mbox{ induces a non-empty clique} \}$ is the set of non-removed vertices of $X_t$ that, after the removal of $\hat{S}$, are going to have neighbors in $V_t\setminus X_t$, but no pair of non-adjacent neighbors;

\item $C_t=\{v \in X_t \setminus \hat{S}: \text{ there exist } u,u' \in N_{V_t \setminus X_t}(v)\setminus \hat{S} \text{ with } uu' \notin E(G) \}$
is the set of non-removed vertices of $X_t$ that, after the removal of $\hat{S}$, are going to have a pair of non-adjacent neighbors in $V_t\setminus X_t$.
\end{itemize}

In addition, the claw-deletion set $\hat{S}$ also provides the set
$Z_t=\{ (x,y) \in (X_t \setminus \hat{S}) \times (X_t \setminus
\hat{S}) : \exists ~w \in V_t \setminus (X_t \cup \hat{S})$ with $
xy, wy \in E(G)$ and $wx \notin E(G)\}$ which consists of ordered
pairs of vertices $x,y$ of $X_t$ that, after the removal of
$\hat{S}$, are going to induce a $P_3=x,y,w$ with some $w\in
V_t\setminus X_t$.

Therefore, the recurrence relation of our dynamic programming has
the signature $\cdn_w[t,S,A,B,C,Z]$, representing the minimum
weight of a vertex set whose removal from $G[V_t]$ leaves a
claw-free graph, such that $S,A,B,C$ form a partition of $X_t$ as
previously described, and $Z$ is as previously described too. The
generated table has size $2^{\mathcal{O}(k^2)} \cdot n$.

Function $\cdn_w$ is computed for every node $t \in V(T)$, for
every partition $S \cup A \cup B \cup C$ of $X_t$, and for every
$Z \subseteq X_t \times X_t$. The algorithm performs the
computations in a bottom-up manner. Let $T$ rooted at $r \in
V(T)$. Notice that $V_r = V(G)$, then $\cdn_w[r,
\emptyset,\emptyset,\emptyset,\emptyset,\emptyset]$ is the weight
of a minimum weight claw-deletion set of $G_r=G$, which solves the
whole problem.

We present additional terminology.
Let $t$ be a node in $T$ with children $t'$ and $t''$, and $X \subseteq X_t$.
To specify the sets $S, A, B, C$ and $Z$ on $t'$ and $t''$, we employ the notation $S', A', B', C', Z'$ and $S'', A'', B'', C'', Z''$, respectively.

Now, we describe the recurrence formulas for the function $\cdn_w$
defined, based on the types of nodes in $T$.

\begin{itemize} \setlength{\parskip}{0pt} \setlength{\itemsep}{2pt plus 0pt}
\item \textbf{Leaf node.} If $t$ is a leaf node in $T$, then
    $\cdn_w[t, \emptyset, \emptyset, \emptyset, \emptyset, \emptyset] = 0.$ \hfill $(1)$

\item \textbf{Introduce node.} Let $t$ be an introduce node with child $t'$ such that $X_t = X_{t'} \cup \{v\}$ for some vertex $v \notin X_{t'}$. Let $S \cup A \cup B \cup C$ be a partition of $X_t$, and $Z \subseteq X_t \times X_t$.
The recurrence is given by the following formulas.
\begin{itemize} \setlength{\parskip}{0pt} \setlength{\itemsep}{2pt plus 0pt}
\item If $v \in S$, then \\ $\cdn_w[t,S,A,B,C,Z]  = \cdn_w[t',S
\setminus \{v\}, A,B,C,Z]+w(v).$ \hfill $(2.1)$

%if there is $Z'$ such that $Z = Z' \cup \{(v,y): y\in B\cup C\mbox{ and } v,y\in E(G) \}$.Otherwise, $\cdn_w[t,S,A,B,C,Z]=\infty$.

\item If $v \in A$, then
$\cdn_w[t,S,A,B,C,Z]  =\cdn_w[t', S, A \setminus \{v\}, B, C, Z'],$ \hfill $(2.2)$\\
 if  $N_{X_t \setminus S}(v) \text{ does not induce a } \overline{K}_3,$
 for every $(x,y) \in Z, vx\in E(G)$ or $vy \notin E(G),$
 $N_{X_t}(v) \cap C = \emptyset,$ there is $Z'$ such that
 $Z = Z' \cup \{(v,y): y\in B\cup C\mbox{ and } vy\in E(G)\}.$\\
 Otherwise, $\cdn_w[t,S,A,B,C,Z]  = \infty.$

\item If $v \in B\cup C$, then $ \cdn_w[t,S,A,B,C,Z]  = \infty.$
\hfill $(2.3)$

\end{itemize}

\item \textbf{Forget node.} Consider $t$ a forget node with child $t'$ such that $X_t = X_{t'} \setminus \{v\}$ for some vertex $v \in X_{t'}$. Let $S \cup A \cup B \cup C$ be a partition of $X_t$, and $Z \subseteq X_t \times X_t$.

If $N_A(v) \neq \emptyset$, then $\cdn_w[t,S,A,B,C,Z]  =
\cdn_w[t', S \cup \{v\}, A,B,C,Z].$  \hfill $(3.1)$

Otherwise, $\cdn_w[t,S,A,B,C,Z]  =$ \\ $\min \big\{ \cdn_w[t', S
\cup \{v\}, A, B, C, Z], \cdn_w[t', S, A', B', C', Z']\big\},$
\hfill $(3.2)$

 among every $(S,A',B',C',Z')$ such that: \\
 $Z = (Z' \setminus \{(x,y) : x = v \text{ or } y = v \}) \cup
  \{(x,y) \in  X_t \times X_t : xy,vy \in E(G) \text{ and } vx \notin E(G) \},\\
 A = A' \setminus N_G[v], \,
 B = ((B' \setminus \{b \in B' : (v,b) \in Z'\}) \cup (A' \cap N_{G}(v))) \setminus \{v\},\\
 C = (C' \cup  \{b \in B' : (v,b) \in Z'\}) \setminus \{v\}.$

\item \textbf{Join node.} Consider $t$ a join node with children $t',t''$ such that $X_t = X_{t'} = X_{t''}$. Let $S \cup A \cup B \cup C$ be a partition of $X_t$, and $Z \subseteq X_t \times X_t$.
The recursive formula is given by

$\cdn_w[t,S,A,B,C,Z] =$ \\ $\min\big\{ \cdn_w[t',S',A',B',C',Z'] +
\cdn_w[t'',S'',A'',B'',C'',Z''] \big\} - w(S),$\hfill $(4)$
among every $(S',A',B',C',Z')$ and $(S'',A'',B'',C'',Z'')$ such
that:
$S = S' = S''$;
 $A = A' \cap A''$;
 $B = (A' \cap B'') \cup (A'' \cap B')$;
 $C = C' \cup C'' \cup (B' \cap B'')$;
 $Z = Z' \cup Z''.$
\end{itemize}

We explain the correctness of these formulas. The base case is
when $t$ is a leaf node. In this case $X_t = \emptyset$, then all
the sets $S,A,B,C,Z$ are empty. The set $X_t = \emptyset$ also
implies that $G[V_t]$ is the empty graph, which is claw-free.
Hence, $\cdn_w(G[V_t])=0$ and Formula (1) holds.

%%%%%%%%%%%% introduce
Let $t$ be an introduce node with child $t'$, and $v$ the vertex
introduced at $t$. First, suppose that $v \in S$. We assume by
inductive hypothesis that $G[V_{t'} \setminus \hat{S}]$ is
claw-free. Since $v \in S \subseteq \hat{S}$, we obtain that
$G[V_t \setminus (\hat{S} \cup \{v\})]$ is claw-free. Then, the
weight of a minimum weight claw-deletion set of $G[V_t]$ is
increased by $w(v)$ from the one of $G[V_{t'}]$, stored at
$\cdn_w[t',S',A',B',C',Z']$. Since $v \in S$, then $v \notin S'$
and the sets $A',B',C',Z'$ in node $t'$ are the same $A,B,C,Z$ of
$t$. Consequently Formula (2.1) holds.

Now, suppose that $v \in A \cup B \cup C$. By definition of tree
decomposition, $v \notin N_{V_t \setminus X_t}(X_t)$. Then, if
$v \in B \cup C$, the partition $S \cup A \cup B \cup C$ is not
defined as required, and this justifies Formula (2.3). Thus, let
$v \in A$. We have three cases in which $G[V_t \setminus \hat{S}]$
contains an induced claw:  ($i$) $N_{X_t}(v)$ induces a
$\overline{K}_3$, or ($ii$) there exists $(x,y) \in Z$, such that
$vx \notin E(G)$ and $vy \in E(G)$, or ($iii$) there exists $c \in
C$ such that $cv \in E(G)$. A set $Z$ according to definition of
$\cdn_w$ is obtained by $Z'$ together with the pairs $(x,y)$ such
that $x = v$, $xy \in E(G)$ and $y$ has at least one neighbor in
$V_t \setminus (X_t \cup \hat{S})$. (Note that $v = y$ is never
achieved, since $v$ is an introduce node and $v \notin N_{V_t
\setminus X_t}(X_t)$). Then, $Z = Z' \cup \{(v,y) : y \in B \cup
C \text{ and } vy\in E(G)\}$.
Hence, Formula (2.2) is justified by the negation of each of cases ($i$), ($ii$), ($iii$).%, and by $Z$ as defined above.

%%%%%%%%%%%% forget
Next, let $t$ be a forget node with child $t'$. Let $v$ be the vertex forgotten at $t$. We consider $N_{A}(v) \neq \emptyset$ or not.
Notice that if $N_{G}(v) \cap A \neq \emptyset$ and $v \notin \hat{S}$, then we have a contradiction to the definition of $A$, because some $a \in A$ is going to have a neighbor in $V_t \setminus (X_t \cup \hat{S})$. %Hence, we assume that $N_{G}(v) \cap A = \emptyset$ or $v \in \hat{S}$.
Therefore, if $N_A(v) \neq \emptyset$,  $v$ indeed must belong to $\hat{S}$, then Formula~(3.1) holds.

Otherwise, consider that $N_{A}(v) = \emptyset$. In this
case, either $v \in \hat{S}$ or $v \notin \hat{S}$. Then, we
choose the minimum between these two possibilities. If $v \in
\hat{S}$ we obtain the value stored at $\cdn_w[t', S \cup \{v\},
A, B, C, Z]$. Otherwise, let $v \notin \hat{S}$. It follows that,
for some $a \in A$, if $va \in E(G)$, then $a$ must now belong to $B$.
Consequently, $A$ must be $A' \setminus N_G[v]$. Let $\mathcal{B}
= \{b \in B' : (v,b) \in Z'\}$. Since $v \notin \hat{S}$, for
every $x \in \mathcal{B}$, $x$ must belong to $C$. Thus, the set
$B$ is given by $B' \setminus \mathcal{B}$ together with the
vertices from $A'$ that now belong to $B$. Recall that $v \notin
X_t$, then $v \notin B$. Hence, $B = ((B' \setminus \mathcal{B})
\cup (A' \cap N_{G}(v)) ) \setminus \{v\}$. Finally, $C = (C' \cup
\mathcal{B}) \setminus \{v\}$. Hence, Formula (3.2) holds.

%%%%%%%%%%%%%%%%%%% join
To conclude, let $t$ be a join node with children $t'$ and $t''$.
Note that the graphs induced by $V_{t'}$ and by $V_{t''}$ can be
distinct. Then, we must sum the values of $\cdn_w$ in $t'$ and in
$t''$ to obtain $\cdn_w$ in $t$, and choose the minimum of all of
these possible sums. Finally, we subtract $w(S)$ from the previous
result, since $w(S)$ is counted twice.

By definition of join node, $X_{t} =X_{t'} = X_{t''}$, then $S = S' = S''$.
Let $x \in X_t$. We have that $x \in A$ if and only if $|N_{V_{t'} \setminus X_{t'}}(x) \setminus \hat{S}| = |N_{V_{t''} \setminus X_{t''}}(x) \setminus \hat{S}| = 0$. Then, $A = A' \cap A''$.

Notice that $x \in B$ if and only if ($|N_{V_{t'} \setminus X_{t'}}(v) \setminus \hat{S}| = 0$ and $|N_{V_{t''} \setminus X_{t''}}(v) \setminus \hat{S}| > 1$) or ($|N_{V_{t''} \setminus X_{t''}}(v) \setminus \hat{S}| = 0$ and $|N_{V_{t'} \setminus X_{t'}}(v) \setminus \hat{S}| > 1$). Consequently $x \in B$ if and only if $x \in (A' \cap B'') \cup (A'' \cap B')$. This implies that $B = (A' \cap B'') \cup (A'' \cap B')$.

Now, $x \in C$ if and only if $x \in C'$ or $x \in C''$ or ($x \in B'$ and $x \in B''$). (Note that by the definition of tree decomposition, the forgotten nodes in $G_{t'}$ and $G_{t''}$ are distinct and therefore the condition $x \in B'$ and $x \in B''$ is safe).
Consequently, $C =  C' \cup C'' \cup (B' \cap B'')$.

Finally, let $x,y \in X_t$. By definition of $Z'$, if $(x,y) \in Z'$, then there exists $w \in V_{t'} \setminus (X_{t'} \cup \hat{S})$  with $xy, wy \in E(G) \text{ and } wx \notin E(G)$. This implies that $w \in V_{t} \setminus (X_{t} \cup \hat{S})$ and $xy, wy \in E(G) \text{ and } wx \notin E(G)$. Hence, $(x,y) \in Z$. By a similar argument, we conclude that if $(x,y) \in Z''$, then $(x,y) \in Z$.
This gives $Z = Z' \cup Z''$, and completes Formula (4).

%%%%%%%%%%%%%% runtime analysis

%Putting everything together, for each $t \in V(T)$, the required running time to compute $\cdn_w[t]$ is of the order of computing function $\cdn_w$ for an introduce node.
Since the time to compute each entry of the table is upper bounded
by $2^{\mathcal{O}(k^2)}$ (see Appendix) and the table has size
$2^{\mathcal{O}(k^2)} \cdot n$, the algorithm can be performed in
$2^{\mathcal{O}(k^2)} \cdot n$ time. This implies linear-time
solvability for graphs with bounded treewidth. \qed\end{proof}
%
% ---- Bibliography ----
%
% BibTeX users should specify bibliography style 'splncs04'.
% References will then be sorted and formatted in the correct style.
%
\bibliographystyle{splncs04}
\bibliography{bnm-j,bnm}

\newpage

\appendix

\section*{Appendix}

\paragraph*{\bf Some Definitions.}

Let $G$ be a graph. Given a vertex $v \in V(G)$, its \textit{open neighborhood} consists of all adjacent vertices to $v$ and is denoted by $N_G(v)$, whereas its \textit{closed neighborhood} is the set $N_G[v] = N_G(v) \cup \{v\}$.  For a set $U \subseteq V(G)$, let $N_G(U) = \bigcup_{v \in U} N_G(v) \setminus U$, and $N_G[U] = N_G(U) \cup U$. When the graph $G$ is clear from the context, we denote $N_G(v) \cap U$ by $N_U(v)$.

The \textit{degree} of a vertex $v \in V(G)$ on a set $U \subseteq V(G)$, is $d_U(v) = |N_G(u) \cap U|$. If $U = V(G)$, we simply write $d_G(u)$.
We say that $v \in V(G)$ is an \textit{isolated} (resp. a \textit{leaf}) vertex if $d_G(v) = 0$ (resp. $d_G(v) = 1$).
A set $U \subseteq V(G)$ is called a \textit{clique} if the vertices in $U$ are pairwise adjacent.

For $U \subseteq V(G)$, the subgraph of $G$ \textit{induced} by $U$, denoted by $G[U]$, is the graph whose vertex set is $U$ and whose edge set consists of all the edges in $E(G)$ that have both endpoints in $U$.  If $H$ is a subgraph of $G$, we write $H \subseteq G$.
For $U \subseteq V(G)$, we denote by $G \setminus U$ the graph $G[V(G) \setminus U]$.

A graph is \textit{connected} is every pair of vertices is joined by a path. A maximal connected subgraph of $G$ is called a \textit{connected component} of $G$.
A graph $G$ is called $k$-\textit{connected} if $G \setminus X$ is connected for every set $X \subseteq V(G)$ with $|X| \leq k$.
A \textit{block} of a graph $G$ is a maximal $2$-connected subgraph of $G$. A vertex $v$ of a graph $G$ is a \textit{cutpoint} if $G \setminus \{v\}$ has more connected components than $G$.

A \textit{block graph} is a graph in which every block is a clique. A \textit{forest} is an acyclic graph or, equivalently, a graph in which every block is an edge. A \textit{linear forest} is the disjoint union of induced paths. A \textit{tree} is a connected forest.

A $k$\textit{-ary tree} is a rooted tree $T$ in which every node of $T$ has at most $k$ children. In particular, for $k = 2$, and $k = 3$ we have the \textit{binary}, and the \textit{ternary} tree, respectively.
A \textit{strict $k$-ary tree} is a rooted tree $T$ in which every node of $T$ has either zero or $k$ children.
The \textit{depth} of a vertex $v \in V(T)$ is the length of a path from $v$ to $r$ in $T$. A \textit{full $k$-ary tree}  is a strict $k$-ary tree in which all leaves have the same depth. %cormen pg 1179

A graph $G$ is a \textit{split graph} if $V(G)$ admits a partition $V(G) = C \cup I$ into a clique $C$ and an independent set $I$. A graph is \textit{chordal} if every cycle of length greater than three has a \textit{chord}, i.e., an edge between two non-consecutive vertices of the cycle.
Forests, block graphs, and split graphs are all subclasses of chordal graphs.

%Let $\Pi$ be an optimization problem. Given an instance $x$ of $\Pi$ and a feasible solution $y$ of $x$, $m_\Pi(x,y)$ denotes the positive integer \textit{measure} (\textit{value}) of $y$. We denote by $\opt_{\Pi}(x)$ the function mapping an instance $x$ to the measure of an optimum solution for $\Pi$.

%Let $\Pi$ and $\Pi'$ be two optimization problems. A \textit{reduction} from $\Pi$ to $\Pi'$ consists of a pair $(f,g)$ of polynomial time computable functions such that,
%\begin{itemize}
%\item for any instance $x$ of $\Pi$, $f(x)$ is an instance of $\Pi'$;
%\item $g(x, y)$ is a feasible solution of $x$ in $\Pi$, for any feasible solution $y$ of $f(x)$ in $\Pi'$.
%\end{itemize}
%Moreover, a reduction $(f,g)$ is an $\S$-\textit{reduction} if:
%\begin{itemize}
%\item for any instance $x$ of $\Pi$, $\opt_{\Pi'}(f(x)) = \opt_\Pi(x)$;
%\item for any instance $x$ of $\Pi$ and for any solution $y$ of $f(x)$, $m_\Pi(x,g(x,y)) = m_{\Pi'}(f(x),y)$.
%\end{itemize}

%\begin{conjecture}[Unique Games Conjecture \cite{khot2002power}]
%For arbitrarily small constants $\zeta,\delta > 0$, there exists a constant $k = k(\zeta, \delta)$ such that it is $\NP$-hard to determine whether a unique $2$-prover game with answers from a domain of size $k$ has value at least $1- \zeta$ or at most $\delta$.
%\end{conjecture}

\begin{definition}\cite{R-S-minors3-pltw}
A \emph{tree decomposition} of a graph $G$ is a pair $\mathcal{T} = (T, \{X_t\}_{t \in V(T)})$, where $T$ is a tree whose every node $t$ is assigned a vertex subset $X_t \subseteq V(G)$ called \emph{bag}, such that the following three conditions hold:
\begin{itemize}
\item $\bigcup_{t \in V(T)} X_t = V(G)$.
\item For every $uv \in E(G)$, there exists a node $t$ of $T$ such that bag $X_t$ contains both $u$ and $v$.
\item For every $u \in V(G)$, the set $T_u = \{t \in V(T) : u \in X_t\}$ induces a connected subgraph of $T$.
\end{itemize}
\end{definition}

The \emph{width} of a tree decomposition is $\max_{t \in V(T)} (|X_t| - 1)$. The \emph{treewidth} $tw(G)$ of a graph $G$ is the minimum possible width of a tree decomposition of $G$.

\begin{definition} \cite{kloks-treewidth}
A \emph{nice tree decomposition} is a tree decomposition with one
special node $r$ called \emph{root} with $X_r = \emptyset$, and
each node is one of the following types:
\begin{itemize}
\item \textbf{Leaf node:} a leaf $\ell$ of $T$ with $X_\ell = \emptyset$.
\item \textbf{Introduce node:} a node $t$ with exactly one child $t'$ such that $X_t = X_{t'} \cup \{v\}$ for some vertex $v \notin X_{t'}$; we say that $v$ is \emph{introduced} at $t$.
\item \textbf{Forget node:} a node $t$ with exactly one child $t'$ such that $X_t = X_{t'} \setminus \{v\}$ for some vertex $v \in X_{t'}$; we say that $v$ is \emph{forgotten} at $t$.
\item \textbf{Join node:} a node $t$ with two children $t',t'$ such that $X_t = X_{t'} = X_{t'}$.
\end{itemize}
\end{definition}

\paragraph*{\bf Proof of Theorem~\ref{theo:clawLinearForestJ}.}
We will first prove that Algorithm~\ref{alg:forestClawDeletionSet}
is correct for trees and that it runs in linear time. Then we will
generalize the result to forests.

\begin{theorem}\label{theo:clawLinearForest}
Let $T$ be a rooted tree of order $n$. A minimum claw-deletion set
of $T$ can be found by Algorithm~\ref{alg:forestClawDeletionSet}
in $\mathcal{O}(n)$ time.
\end{theorem}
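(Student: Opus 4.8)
The plan is to reduce the claw condition to a degree condition and then prove \textsc{Claw-Deletion-Set} correct by structural induction on the rooted tree, maintaining exactly the invariant stated in the algorithm's output specification. Since $T$ is a forest, any three neighbours of a vertex are pairwise non-adjacent, so a vertex is the centre of an induced claw if and only if it has degree at least $3$; hence $T_v^+\setminus S$ is claw-free if and only if it has maximum degree at most $2$, recovering the fact (recalled in the preliminaries) that claw-free forests are exactly linear forests. I will use throughout the monotonicity $\cdn(H)\le\cdn(G)$ for induced subgraphs $H\subseteq G$, together with the elementary bounds $\cdn(T_v)\le\cdn(T_v^+)\le\cdn(T_v)+1$ (delete $p$ from an optimum for $T_v$) and $\cdn(T_v\setminus\{v\})\le\cdn(T_v)\le\cdn(T_v\setminus\{v\})+1$, which show that the two hypotheses in the output specification really concern a genuine ``$+1$'' jump. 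The inductive claim is that the call returns a claw-deletion set $S$ of $T_v^+$ with $|S|=\cdn(T_v^+)$ satisfying both stated properties. The base case is a leaf $v$: $T_v^+$ is an edge (or a single vertex), already claw-free, the algorithm returns $\emptyset$, and both properties hold vacuously.

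For the inductive step, let $u_1,\dots,u_d$ be the children of $v$ and $S_i$ the set returned by the recursive call on $u_i$ with parent $v$. The key structural observation is that the trees $T_{u_i}^+$ pairwise intersect only in $v$; therefore $u_i\in S$ if and only if $u_i\in S_i$, and $v\in S$ if and only if some $S_i$ contains $v$. Moreover $v$ is inserted into $S_i$ \emph{only} in the branch where the call on $u_i$ reached \texttt{c = 2} and added its parent, that is, exactly when property~(P1) forces $v\in S_i$, which by the hypothesis is the case $\cdn(T_{u_i}^+)=1+\cdn(T_{u_i})$. Consequently $c=|C(v)\setminus S|$ equals the number of \emph{surviving} children of $v$. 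Validity then follows easily: for each $i$, the restriction of $T_v^+\setminus S$ to $V(T_{u_i}^+)$ is an induced subgraph of the claw-free $T_{u_i}^+\setminus S_i$ (as $S\cap V(T_{u_i}^+)\supseteq S_i$, and deleting the extra vertices, $v$ in particular, only removes edges), so no claw can lie inside a child subtree; the only vertex whose neighbourhood spans several subtrees is $v$, and when $v$ survives its degree in $T_v^+\setminus S$ equals $c$ plus one if $p$ survives. The two branches (delete $v$ when $c\ge 3$; delete $p$ when $c=2$, $p\neq\emptyset$ and $v\notin S$) are precisely what forces this degree down to at most $2$, while for $c\le 1$, or $c=2$ with $v$ already deleted, nothing need be done.

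The heart of the argument, and the step I expect to be the main obstacle, is \emph{optimality together with the tie-breaking properties}. I would argue by a local exchange on an arbitrary minimum claw-deletion set $S^{*}$ of $T_v^+$. First one shows that the children's contributions may be taken to be minimum claw-deletion sets of the respective $T_{u_i}^+$ (when $v\notin S^{*}$) or of $T_{u_i}$ (when $v\in S^{*}$, since deleting $v$ severs the edge $vu_i$), using again that the subtrees meet only at $v$; the two relevant costs differ by exactly $1$ on precisely the children satisfying (P1). The delicate point is the choice, when $v$ survives with $c=2$, to spend the extra deletion on the parent $p$ rather than on $v$ or on a surviving child: I would show that replacing, in any optimum, whatever deletion resolves the would-be claw at $v$ by the deletion of $p$ yields a claw-deletion set of $T_v^+$ of the same size (deleting $p$ can only lower degrees elsewhere, so it creates no new claw), so some optimum deletes $p$; this simultaneously certifies $\cdn(T_v^+)=1+\cdn(T_v)$ and establishes (P1). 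The symmetric bookkeeping for the $c\ge 3$ branch, where deleting $v$ is forced and attains $\cdn(T_v)=1+\cdn(T_v\setminus\{v\})$, yields (P2). Assembling these cases gives $|S|=\cdn(T_v^+)$ together with both properties, completing the induction; applying it to the recursive call at the root, where $p=\emptyset$, then gives a minimum claw-deletion set of $T=T_r^{+}$.

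Finally, for the running time, the recursion visits each vertex of $T$ exactly once, and the work at $v$ (accumulating the children's sets, computing $c=|C(v)\setminus S|$, and performing at most one additional insertion) is proportional to $|C(v)|$. Representing $S$ by a single Boolean array indexed by $V(T)$ makes every membership test and insertion take constant time, so the total cost is $\sum_{v\in V(T)}\mathcal{O}(1+|C(v)|)=\mathcal{O}(n)$, as claimed.
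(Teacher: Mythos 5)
Your skeleton coincides with the paper's: structural induction maintaining exactly the algorithm's output invariant, the observation that a forest is claw-free iff it has maximum degree at most $2$, the case analysis on $c$, and the array-based $\mathcal{O}(n)$ accounting. The problem is the optimality step, which you correctly single out as the crux but support only with exchange arguments that do not hold as stated. First, ``the children's contributions may be taken to be minimum claw-deletion sets of the respective $T_{u_i}^+$'' is itself a nontrivial exchange: swapping $S^*\cap V(T_{u_i})$ for the canonical $S_i$ can \emph{increase} the set of children of $v$ that survive (the invariant makes $S_i$ keep $u_i$ whenever keeping it is free, whereas $S^*$ may have paid to delete $u_i$ precisely to kill the claw at $v$), so the swapped set need not remain a claw-deletion set of $T_v^+$ at the same cost. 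Second, in the $c=2$ branch, ``replacing whatever deletion resolves the would-be claw at $v$ by the deletion of $p$'' is justified only by ``deleting $p$ creates no new claw''; that covers the harmless half of the exchange. The dangerous half is \emph{un-deleting} the vertex you take out of $S^*$ (which may be $v$ itself or a child $u_j$): restoring it can re-expose claws centred at it or at its neighbours deeper in the tree, so the exchange is not validity-preserving in general. Third, in the $c\ge 3$ branch deleting $v$ is not ``forced'' --- an optimum may instead delete all but two of the surviving children --- so that branch still owes a lower bound rather than a forcing argument.

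The paper avoids every one of these exchanges with a single additive lower bound: $V(T_v^+)$ is partitioned into vertex-disjoint induced pieces, namely the star on $\{v\}\cup(C(v)\setminus S)$ (together with $p$ in the $c=2$ case), the sets $V(T_{u_i})\setminus\{u_i\}$ for surviving children, and $V(T_{u_i})$ for deleted children; then $\cdn(T_v^+)$ is at least the sum of the claw-deletion numbers of the pieces, the top piece contributes exactly $1$ when a claw survives there, and the inductive tie-breaking properties guarantee that each subtree piece contributes exactly $|S_i|$, so the sum equals the size of the algorithm's output. If you insist on the exchange route you must prove each swap preserves both size and validity across the interface at $v$, and the bookkeeping needed to do so essentially reconstructs this additive bound; as written, the optimality argument has a genuine gap at its central step.
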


\begin{proof}
Let $T$ be a rooted tree. We will prove by induction that
Algorithm~\ref{alg:forestClawDeletionSet} is correct. The basis is
the case when $C(v)=\emptyset$. Since $T_v^+$ consists either of a
single edge or of a single vertex (when $p = \emptyset$), clearly
the empty set is a minimum claw-deletion set of $T_v^+$. Moreover,
$\cdn(T_v^+)=\cdn(T_v)=\cdn(T_v\setminus \{v\})$. Hence, function
\textsc{Claw-Deletion-Set} is correct when $C(v) = \emptyset$.

Suppose that $C(v)\neq\emptyset$ and let $C(v) =
\{u_1,\dots,u_k\}$, for $k \geq 1$. For the inductive hypothesis,
we assume that $S_i = \textsc{Claw-Deletion-Set}(T,u_i,v)$ is a
minimum claw-deletion set of $T_{u_i}^+$, for every $1 \leq i \leq
k$, such that: if $\cdn(T_{u_i}^+)=1+\cdn(T_{u_i})$ then $v \in
S_i$; if $\cdn(T_{u_i}^+)=\cdn(T_{u_i})$ and
$\cdn(T_{u_i})=1+\cdn(T_{u_i}\setminus \{u_i\})$ then $u_i \in
S_i$.

Let $S = S_1 \cup \dots \cup S_k$.

If $v \in S$, then the connected components of $T_v^+ \setminus S$
are $\{p\}$ (when $p \neq \emptyset$) and the connected components
of $T_{u_i}^+ \setminus S$, for $1 \leq i \leq k$, which, by
inductive hypothesis, are induced paths. So $S$ is a claw-deletion
set of $T_v^+$. Also, by minimality, $S_i \setminus \{v\}$ is a
minimum claw-deletion set of $T_{u_i}$, for every $1 \leq i \leq
k$. Let $1 \leq j \leq k$ such that $v \in S_j$. Then $\cdn(T_v^+)
\geq \cdn(T_v) \geq \cdn(T_{u_j}^+) + \sum_{1 \leq i \leq k; i\neq
j} \cdn(T_{u_i}) = |S|$. Thus, $S$ is a minimum claw-deletion set
of $T_v^+$ and $\cdn(T_v^+) = \cdn(T_v)$. This also implies that
$S$ satisfies the further conditions required to the output.

From now on, suppose that $v \not \in S$. Then, by inductive
hypothesis, $\cdn(T_{u_i}^+)=\cdn(T_{u_i})$ and, moreover, $S_i$
is also a minimum claw-deletion set of $T_{u_i}$, for every $1
\leq i \leq k$. Let $c = |C(v) \setminus S|$. For each $u_i \in
C(v) \setminus S$, it also holds
$\cdn(T_{u_i})=\cdn(T_{u_i}\setminus \{u_i\})$ and $S_i$ is a
minimum claw-deletion set of $T_{u_i} \setminus \{u_i\}$

Suppose first that $c \leq 1$, i.e., $C(v) \setminus S \subseteq
\{u_j\}$ for some $1 \leq j \leq k$. Then, the connected
components of $T_v^+ \setminus S$ are the connected components of
$T_{u_i} \setminus S$, for $1 \leq i \leq k$, $i \neq j$, plus the
connected components of $T_{u_j}^+ \setminus S$ which, by
inductive hypothesis, are induced paths, with the addition of
vertex $p$ (when $p \neq \emptyset$) to the path containing $v$.
It is easy to see that the resulting component is still an induced
path. So $S$ is a claw-deletion set of $T_v^+$. Since $\cdn(T_v^+)
\geq \cdn(T_v) \geq \sum_{1 \leq i \leq k} \cdn(T_{u_i}) = |S|$,
$S$ is a minimum claw-deletion set of $T_v^+$ and $\cdn(T_v^+) =
\cdn(T_v) = \cdn(T_v \setminus \{v\})$. This also implies that $S$
satisfies the further conditions required to the output.

Suppose now that $c \geq 3$. Using the inductive hypothesis and
similarly to the case where $v \in S$, it is not difficult to see
that $S \cup \{v\}$ is a claw-deletion set of $T_v^+$. Moreover,
$\cdn(T_v^+) \geq \cdn(T_v) \geq \cdn(T[\{v\} \cup C(v) \setminus
S]) + \sum_{u_i \in C(v) \setminus S} \cdn(T_{u_i} \setminus
\{u_i\}) + \sum_{u_i \in C(v) \cap S} \cdn(T_{u_i}) = 1+|S|$. This
shows that $S \cup \{v\}$ is a minimum claw-deletion set of
$T_v^+$ and $\cdn(T_v^+) = \cdn(T_v)$. So $S \cup \{v\}$ satisfies
the required conditions.

Finally, suppose that $c = 2$, i.e., $C(v) \setminus S = \{u_j,
u_{j'}\}$ for some $1 \leq j < j' \leq k$. The connected
components of $T_v \setminus S$ are the connected components of
$T_{u_i} \setminus S$, for $1 \leq i \leq k$, $i \neq j, j'$ plus
the connected components of $T_{u_j}^+ \setminus S$ and of
$T_{u_{j'}}^+ \setminus S$ not containing $v$ which, by inductive
hypothesis, are induced paths, plus a path having $u_j v u_{j'}$
as a subpath. So $S$ is a claw-deletion set of $T_v$ and $S \cup
\{p\}$ is a claw-deletion set of $T_v^+$ when $p \neq \emptyset$.
Notice that, when $p \neq \emptyset$, $T[\{p,v,u_j,u_{j'}\}]$ is a
claw, so $\cdn(T[\{p,v,u_j,u_{j'}\}])=1$. When $p = \emptyset$,
$\cdn(T[\{p,v,u_j,u_{j'}\}])=0$. Then $\cdn(T_v^+) \geq
\cdn(T[\{p,v,u_j,u_{j'}\}]) + \cdn(T_{u_j} \setminus \{u_j\}) +
\cdn(T_{u_{j'}} \setminus \{u_{j'}\}) + \sum_{1 \leq i \leq k; i
\neq j, j'} \cdn(T_{u_i}) = 1+|S|$ when $p \neq \emptyset$, and
$|S|$ otherwise. This shows that $S \cup \{p\}$ (resp. $S$) is a
minimum claw-deletion set of $T_v^+$ when $p \neq \emptyset$
(resp. when $p = \emptyset$). In the first case, by minimality,
$S$ is also a minimum claw-deletion set of $T_v$, so $\cdn(T_v^+)
= 1+\cdn(T_v)$, and $S \cup \{p\}$ satisfies the required
conditions. In the second case, $\cdn(T_v)=\cdn(T_v\setminus
\{v\})$, so $S$ satisfies the required conditions.

Therefore, \textsc{Claw-Deletion-Set} returns correctly a minimum
claw-deletion set of $T_v^+$ satisfying that if
$\cdn(T_v^+)=1+\cdn(T_v)$ then $p \in S$, and if
$\cdn(T_v^+)=\cdn(T_v)$ and $\cdn(T_v)=1+\cdn(T_v\setminus \{v\})$
then $v \in S$.

\smallskip

Next, we perform the runtime analysis of Algorithm~\ref{alg:forestClawDeletionSet}.

First, we have that checking each conditional statement of
Algorithm~\ref{alg:forestClawDeletionSet} requires
$\mathcal{O}(1)$ time if the tree is represented by lists of
children. Initializing $S=\emptyset$ at the very beginning of the
algorithm can be done in $\mathcal{O}(n)$ time by representing $S$
by an array. In that case, adding a vertex to $S$ can be done in
constant time. The assignment and union operations of Line~6 of
the algorithm are not necessary if all the recursive calls work on
the same array representing the set $S$. Line~7 computes the
number of children of a vertex $v$ which are not in $S$. Having
the list of children and $S$ represented by an array, this step
takes $\mathcal{O}(d_T(v))$ time. Since function
\textsc{Claw-Deletion-Set} is executed exactly one time for every
vertex $v \in V(T)$, we conclude that
Algorithm~\ref{alg:forestClawDeletionSet} runs in
$\mathcal{O}(n+m) = \mathcal{O}(n)$ time. \qed\end{proof}

From Theorem~\ref{theo:clawLinearForest}, we obtain the following
Corollary~\ref{cor:linearForest}, and together imply
Theorem~\ref{theo:clawLinearForestJ}.

\begin{corollary}\label{cor:linearForest}
Given a forest $F$, and a positive integer $k$, the problem of
deciding whether $F$ can be transformed into a linear forest with
at most $k$ vertex deletions can be solved in linear time.
\end{corollary}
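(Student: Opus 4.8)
The plan is to prove Corollary~\ref{cor:linearForest}, the generalization from trees to forests, by reducing the forest case to the tree case already established in Theorem~\ref{theo:clawLinearForest}. The key structural observation I would invoke first is from the Preliminaries: the class of linear forests coincides with the class of claw-free forests. This means that transforming a forest $F$ into a linear forest by vertex deletions is exactly the same as finding a minimum claw-deletion set of $F$, so the two phrasings of the problem are interchangeable and I may work with $\cdn$ throughout.

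The main idea is that a claw in a forest is always contained within a single connected component, since any induced $K_{1,3}$ is connected and hence lives in one component. Therefore, as already noted in the Preliminaries (``a minimum claw-deletion set of a graph is the union of minimum claw-deletion sets of its connected components''), I would argue that $\cdn(F) = \sum_{i} \cdn(T^{(i)})$, where $T^{(1)}, \dots, T^{(p)}$ are the connected components (trees) of $F$. The forward direction is immediate because deleting the union of component-wise claw-deletion sets destroys every claw; the reverse (minimality) follows because the restriction of any claw-deletion set of $F$ to a component $T^{(i)}$ is itself a claw-deletion set of $T^{(i)}$, so no union of component solutions can beat the sum of the individual minima.

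Concretely, I would root each tree component $T^{(i)}$ at an arbitrary vertex $r_i$ and invoke Algorithm~\ref{alg:forestClawDeletionSet} via the call \textsc{Claw-Deletion-Set}$(T^{(i)}, r_i, \emptyset)$; by Theorem~\ref{theo:clawLinearForest} each such call returns a minimum claw-deletion set of $T^{(i)}$ in $\mathcal{O}(|V(T^{(i)})|)$ time, and by the decomposition argument their union is a minimum claw-deletion set of $F$. For the decision version, I would compare the total cardinality $\sum_i \cdn(T^{(i)})$ against $k$. For the running time, summing $\mathcal{O}(|V(T^{(i)})|)$ over all components gives $\mathcal{O}(|V(F)|)$ linear time, after accounting for the $\mathcal{O}(|V(F)|+|E(F)|) = \mathcal{O}(|V(F)|)$ preprocessing needed to identify the connected components.

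I do not anticipate a serious obstacle here, since the heavy lifting was done in Theorem~\ref{theo:clawLinearForest}; the only point requiring a word of care is the justification that claws never span multiple components, which makes the component-wise decomposition valid, and the bookkeeping that the algorithm is invoked once per component with the parent of the root set to $\emptyset$ so that the root is treated correctly as having no parent.
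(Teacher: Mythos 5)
Your proposal is correct and follows exactly the route the paper intends: the paper states this corollary as an immediate consequence of Theorem~\ref{theo:clawLinearForest}, relying on the preliminary observations that linear forests are precisely the claw-free forests and that a minimum claw-deletion set of a graph is the union of minimum claw-deletion sets of its connected components. Your componentwise decomposition, the per-component invocation of \textsc{Claw-Deletion-Set} with parent $\emptyset$, and the linear-time accounting match the paper's (unwritten) argument in full.
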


\paragraph*{\bf Exact Values for Full $\mathbf{k}$-ary Trees.}

We determine the claw-deletion number of a $k$-ary tree $T$ with
height $h$, as a function of $k$ and $h$. The cases $k = 2$ and $k
\geq 3$ follow in Theorems~\ref{theo:binary}~and~\ref{theo:kAry},
respectively.

\begin{theorem}\label{theo:binary} %h \geq 0
Let $T$ be a full binary tree of height $h$, and $t = (h+1) \!
\mod 3$. Then $ \cdn(T) = (2^{h+1}-2^t)/7. $
\end{theorem}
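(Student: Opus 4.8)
The plan is to exploit the fact that, for forests, being claw-free is equivalent to being a linear forest, i.e., to having maximum degree at most $2$ (as noted in the Preliminaries). In a full binary tree of height $h$ every non-root internal vertex has degree exactly $3$, so the task is precisely to delete a minimum set of vertices that reduces the maximum degree to $2$. The structural feature I would lean on is symmetry: all subtrees of a given height are isomorphic, so the behaviour of Algorithm~\ref{alg:forestClawDeletionSet} depends only on the height of the current subtree. This collapses the whole computation into a one-parameter recurrence.

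Concretely, for a full binary subtree of height $j$ I would track three quantities: $A_j$, the minimum number of deletions with the subtree root removed; $B_j$, the minimum with the root kept and at most one of its children retained (so the root can still accept its parent edge); and $C_j$, the minimum with the root kept and both children retained (so the root already has degree $2$). Analysing how a root combines its two height-$(j-1)$ subtrees yields
\begin{align*}
A_j &= 1 + 2\min\{A_{j-1},B_{j-1},C_{j-1}\},\\
B_j &= A_{j-1} + \min\{A_{j-1},B_{j-1}\},\\
C_j &= 2B_{j-1},
\end{align*}
with $A_0 = 1$, $B_0 = 0$, $C_0 = \infty$, and the claw-deletion number of the standalone height-$h$ tree equal to $c_h := \min\{A_h,B_h,C_h\}$. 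These formulas are exactly the specialization of Algorithm~\ref{alg:forestClawDeletionSet}, together with the output invariants from Theorem~\ref{theo:clawLinearForest}, to the full binary case, so their correctness is inherited rather than re-proved.

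I would then obtain the closed form by induction. The crucial observation, to be carried as a strengthened induction hypothesis, is that the minimizing state cycles with period $3$: the minimum is attained by $B_h$, $C_h$, $A_h$ according as $h \equiv 0,1,2 \pmod 3$ (equivalently, according to the value of $t$). Granting this, the recurrences collapse, within each residue class, into the single first-order recurrence $c_{h+3} = 8\,c_h + 2^{t}$ with the same $t$ throughout the class; its solution with the base values $c_0=c_1=0$ and $c_2=1$ is precisely $c_h = (2^{h+1}-2^{t})/7$. This last step is the routine identity $(2^{h+4}-2^{t})/7 = 8(2^{h+1}-2^{t})/7 + 2^{t}$.

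The main obstacle is exactly the period-$3$ claim about which state is optimal. It cannot be settled by tracking $c_h$ alone, because the $\min$ operators appearing in the recurrences for $A_j$ and $B_j$ depend on the relative order of $A$, $B$, and $C$ at the previous level. Hence I would propagate the three sequences simultaneously (or, more economically, the signs of the differences $A_j-B_j$ and $B_j-C_j$), checking that one full period of three levels reproduces the same pattern of inequalities. Once that sign pattern is shown invariant across a period, the reduction to $c_{h+3}=8c_h+2^{t}$ and hence the closed form follow immediately.
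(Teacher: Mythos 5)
Your proposal is correct, but it reaches the formula by a genuinely different route than the paper. The paper's proof simply traces Algorithm~\ref{alg:forestClawDeletionSet} on the full binary tree: by the same symmetry you invoke, the algorithm deletes exactly the vertices at depths $h-2, h-5, \dots, t$, so $\cdn(T) = 2^{h-2}+2^{h-5}+\cdots+2^{t}$, a geometric progression with ratio $2^{-3}$ summing to $(2^{h+1}-2^{t})/7$, and minimality is inherited wholesale from Theorem~\ref{theo:clawLinearForest}. You instead set up an exhaustive three-state, height-indexed recurrence and solve it in closed form, never identifying the optimal set explicitly. Both routes work: the paper's buys an explicit description of an optimal deletion set and a two-line computation, while yours is self-contained at the level of optimal substructure (it does not depend on which choices the greedy procedure happens to make) at the cost of the period-$3$ bookkeeping. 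Two remarks on your write-up. First, your recurrences are not literally ``the specialization of Algorithm~\ref{alg:forestClawDeletionSet}'': that algorithm is a particular greedy procedure, whereas $A_j,B_j,C_j$ form a standard exhaustive dynamic program whose correctness rests on the (easy, but separate) observation that an optimal solution restricted to a child's subtree is optimal for that subtree subject to the root's boundary state; it is not ``inherited'' from the algorithm's correctness. Second, your acknowledged obstacle, the period-$3$ claim, does go through with only the sign information you propose to carry: assuming at a level $h\equiv 2 \pmod 3$ that $A_h\le B_h$ and $A_h\le C_h$, every $\min$ resolves and one gets $B_{h+1}=2A_h$, $C_{h+2}=4A_h$, $A_{h+3}=8A_h+1$, with each required comparison reducing either to $x\le x+1$ or to the hypothesis $A_h\le B_h$, and with the pattern $A_{h+3}\le B_{h+3}=C_{h+3}$ restored at the end of the period. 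Together with the bases $c_0=c_1=0$, $A_2=1\le B_2=C_2=2$, and the identity $8\cdot(2^{h+1}-2^{t})/7+2^{t}=(2^{h+4}-2^{t})/7$, this closes the induction, so your plan is sound as stated.
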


\begin{proof}
Algorithm~\ref{alg:forestClawDeletionSet} chooses a claw-deletion
$S$ of $T$ comprised by all the vertices in depth $h-2$.
Subsequently, the same procedure chooses all the vertices in depth
$h-5$, and so on, until the depth $t  = (h+1) \! \mod 3$. For
every $1 \leq i \leq k$, the amount of vertices in depth $i$ is
$2^i$. Then
\[ \cdn(T) = |S| = 2^{h-2} + 2^{h-5} + \dots + 2^t. \]
That leads to a geometric progression with ratio $r = 2^{-3}$, and
$(h-t+1)/3$ terms, which results in $\cdn(T) = (2^{h+1}-2^t)/7$.
\qed\end{proof}

The result of Theorem~\ref{theo:binary} can be rewritten as a
function of the order of $T$.

\paragraph*{\bf Proof of Corollary~\ref{cor:binary}.} \emph{Let $T$
be a full binary tree with $n$ vertices, and $t = \log_2(n+1) \!
\mod 3$. Then $\cdn(T) = \displaystyle \frac{n+1-2^t}{7}.$}

\begin{proof}
We know that a full binary tree with $n$ vertices has height $h =
\log_2(n+1)-1$. By Theorem~\ref{theo:binary} with $t = \log_2(n+1)
\! \mod 3$, we obtain
\[\cdn(T) = \frac{2^{h+1}-2^t}{7} = \frac{2^{\log_2(n+1)}-2^t}{7} = \frac{n+1-2^t}{7}.\]
\qed\end{proof}

Next, we proceed to full $k$-ary trees with $k \geq 3$ in
Theorem~\ref{theo:kAry}.

\begin{theorem}\label{theo:kAry}
Let $T$ be a full $k$-ary tree of height $h$, for $k \geq 3$, and
$t = (h-1) \! \mod 2$. Then $\cdn(T) = (k^{h+1}-k^t)/(k^2-1).$
\end{theorem}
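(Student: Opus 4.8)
The plan is to reduce everything to reading off the behaviour of Algorithm~\ref{alg:forestClawDeletionSet} and then summing a geometric series. By Theorem~\ref{theo:clawLinearForest} the set $S$ returned by \textsc{Claw-Deletion-Set}$(T,r,\emptyset)$ on the root $r$ of $T$ is a minimum claw-deletion set, so $\cdn(T)=|S|$; it therefore suffices to identify exactly which vertices the algorithm selects on a full $k$-ary tree of height $h$ and to count them, using that such a tree has exactly $k^d$ vertices at each depth $d$, for $0\le d\le h$.

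The second step, which is the heart of the argument, is to trace the bottom-up execution and prove by induction on the processed depth the invariant that a vertex lies in $S$ if and only if its depth belongs to $A:=\{\,d : t\le d\le h-1,\ d\equiv t\mod 2\,\}$, i.e.\ the depths $h-1,h-3,\dots,t$. The base case is the leaves at depth $h$, which return $\emptyset$ and are never inserted. The decisive feature of the case $k\ge 3$ is that every internal vertex has exactly $k\ge 3$ children: in the inductive step a vertex $v$ at depth $d$ has all $k$ of its children simultaneously in $S$ (when $d+1\in A$) or simultaneously outside $S$ (when $d+1\notin A$), so the quantity $c=|C(v)\setminus S|$ equals $0$ or $k$ and is \emph{never} $2$. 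Consequently the branch that inserts the parent $p$ (triggered only when $c=2$) is never executed, and $v$ is inserted exactly when $c=k\ge 3$, that is exactly when $d+1\notin A$. A short parity-and-range check gives $d\in A\iff d+1\notin A$ for every processed depth $0\le d\le h-1$, which closes the induction and shows that $S$ is precisely the set of vertices at the depths in $A$. This is the crucial contrast with the full binary tree of Theorem~\ref{theo:binary}, where $c=2$ does occur, the parent-insertion branch fires, and the selected depths are spaced by $3$ rather than by $2$.

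Given the invariant, I would finish by computing $|S|=\sum_{d\in A}k^d=k^t+k^{t+2}+\cdots+k^{h-1}$, a geometric progression with common ratio $k^2$ and $(h+1-t)/2$ terms; since $t\equiv h-1\mod 2$, the exponents are exactly $t,t+2,\dots,h-1$, and summing yields $(k^{h+1}-k^t)/(k^2-1)$, as claimed. The main obstacle I anticipate is not the summation but the careful boundary bookkeeping at the top of the tree: verifying that the uppermost selected level is exactly depth $t$ and that the root is handled consistently (selected precisely when $t=0$, left untouched when $t=1$, with $p=\emptyset$ causing no difficulty because the $c=2$ branch is never reached). Establishing the equivalence $d\in A\iff d+1\notin A$ rigorously over the whole range, including the endpoint where $d+1$ could a priori equal $h$, is the part that demands the most attention.
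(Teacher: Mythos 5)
Your proposal is correct and follows essentially the same route as the paper's proof: trace Algorithm~\ref{alg:forestClawDeletionSet} on the full $k$-ary tree, observe that it selects exactly the vertices at depths $h-1, h-3, \dots, t$, and sum the resulting geometric progression with ratio $k^2$ and $(h-t+1)/2$ terms. Your explicit induction --- in particular the observation that $c=|C(v)\setminus S|$ is always $0$ or $k$ and never $2$ when $k\ge 3$, so the parent-insertion branch never fires --- supplies detail that the paper merely asserts, but the underlying argument is identical.
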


\begin{proof}
Algorithm~\ref{alg:forestClawDeletionSet} chooses a claw-deletion
$S$ of $T$ comprised by all the vertices in depth $h-1$, all the
vertices in depth $h-3$, and so on, until the depth $t  = (h-1) \!
\mod 2$. Then,
\[\cdn(T) = |S| = k^{h-1} + k^{h-3} + \dots + k^t.\]
That leads to a geometric progression with ratio $r = k^{-2}$, and
$(h-t+1)/2$ terms, which follows that $\cdn(T) =
(k^{h+1}-k^t)/(k^2-1)$. \qed\end{proof}

Theorem~\ref{theo:kAry} rewritten as a function of the order of
$T$ follows below.

\paragraph*{\bf Proof of Corollary~\ref{cor:kAry}.} \emph{Let $T$ be
a full $k$-ary tree with $n$ vertices, for $k \geq 3$, and $t =
\log_k(nk-n+1) \! \mod 2$. Then $ \cdn(T) = \displaystyle
\frac{nk-n+1-k^t}{k^2-1}. $ }

\begin{proof}
We know that a full $k$-ary tree with $n$ vertices has height $h =
\log_k(nk-n+1)-1$. By Theorem~\ref{theo:binary} with $t =
\log_k(nk-n+1) \! \mod 2$, we obtain
\[ \cdn(T) = \frac{k^{h+1}-k^t}{k^2-1} = \frac{k^{\log_k(nk-n+1)}-k^t}{k^2-1} = \frac{nk-n+1-k^t}{k^2-1}.\]
\qed\end{proof}

We establish the proportion of vertices in $V(T)$ that belongs to
a claw-deletion set of $T$.

\begin{corollary}
Let $T$ be a full $k$-ary tree of order $n$ and height $h$. Let $t
= (h+1) \! \mod 3$ and $t' = (h-1) \! \mod 2$. It holds that
 \[ \frac{\cdn(T)}{n} =
    \left \{
    \begin{array}{l l}
      \frac{2^{h+1}-2^t}{7(2^{h+1}-1)}, & \text{if $k = 2$;} \\
      \frac{k^{h+1}-k^{t'}}{(k+1)(k^{h+1}-1)}, & \text{if $k \geq 3$.}
    \end{array}
    \right.\]
 In addition, $t = t' = 0$ implies
 \[ \frac{\cdn(T)}{n} =
    \left \{
    \begin{array}{l l}
      1/7, & \text{if $k = 2$;} \\
      1/(k+1), & \text{if $k \geq 3$.}
    \end{array}
    \right.\]
\end{corollary}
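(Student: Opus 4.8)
The plan is to obtain $\cdn(T)/n$ by dividing the closed-form expressions for $\cdn(T)$ already established as functions of the height $h$ by the elementary vertex count of a full $k$-ary tree. Since both $\cdn(T)$ (via Theorem~\ref{theo:binary} and Theorem~\ref{theo:kAry}) and $n$ admit clean formulas in terms of $h$, the statement reduces to a purely algebraic simplification, and the modular parameters $t=(h+1)\!\mod 3$ and $t'=(h-1)\!\mod 2$ are exactly the ones appearing in those two theorems.

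First I would record the order of a full $k$-ary tree of height $h$. Since every level $0 \le i \le h$ contains exactly $k^i$ vertices, $n = \sum_{i=0}^{h} k^i$. Summing this geometric series gives $n = 2^{h+1}-1$ when $k = 2$, and $n = (k^{h+1}-1)/(k-1)$ when $k \ge 3$.

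Next, for $k = 2$, I would substitute $\cdn(T) = (2^{h+1}-2^{t})/7$ from Theorem~\ref{theo:binary} together with $n = 2^{h+1}-1$, which yields the first claimed value $\frac{2^{h+1}-2^{t}}{7(2^{h+1}-1)}$ at once. For $k \ge 3$, I would substitute $\cdn(T) = (k^{h+1}-k^{t'})/(k^2-1)$ from Theorem~\ref{theo:kAry} and $n = (k^{h+1}-1)/(k-1)$; writing $k^2-1 = (k-1)(k+1)$ and cancelling the common factor $(k-1)$ produces $\frac{k^{h+1}-k^{t'}}{(k+1)(k^{h+1}-1)}$, which is the second case.

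Finally, the special case $t = t' = 0$ follows by setting $2^{t}=1$ and $k^{t'}=1$: the numerator becomes $2^{h+1}-1$ (resp.\ $k^{h+1}-1$), which cancels against the matching factor in the denominator, leaving $1/7$ (resp.\ $1/(k+1)$). I do not expect a genuine obstacle, as the entire argument is routine algebra; the only points requiring care are to invoke each theorem with its own modular parameter ($t$ modulo $3$ for $k=2$, $t'$ modulo $2$ for $k \ge 3$) and to derive the vertex count correctly as a geometric series before cancelling.
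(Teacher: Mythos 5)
Your proposal is correct and is exactly the intended argument: the paper states this corollary without proof as an immediate consequence of Theorems~\ref{theo:binary} and~\ref{theo:kAry} combined with the geometric-series vertex count $n=\sum_{i=0}^{h}k^{i}$, and your algebraic simplification (including the cancellation of the factor $k-1$ and the $t=t'=0$ specialization) fills in precisely those routine steps.
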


Among  full $k$-ary trees, $k = 3$ maximizes the proportion of
vertices in a claw-deletion set.

\paragraph*{\bf Proof of Theorem~\ref{theo:block}.} \emph{
Let $G$ be a weighted connected block graph which is not complete.
Let $T$ be the block-cutpoint-tree of $G$, rooted at a cutpoint
$r$. The previous function $f(r)$ computes correctly the minimum
weight of a claw-deletion set of $G$.}

\begin{proof}
We will prove by induction (bottom-up), that $f_1$, $f_2$, $f_3$
on $V(T)$ correctly compute the weight stated in their definition.
In that case, being $r$ a cutpoint of $G$ and $G_r = G$, it is
clear that $f(r)$ computes the minimum weight of a claw-deletion
set of $G$.

Let $b$ be a leaf of $T$. Then $b$ is a block of $G$, and $G_b$
and $G_b^-$ are complete, so any set is a claw-deletion set of
$G_b$ and $G_b^-$. Moreover, every vertex of $b \setminus
\{p_T(b)\}$ is simplicial in $G$, so the weight of $b \setminus
\{p_T(b)\}$ is $s(b)$. Thus, $f_1(b) = s(b)$, $f_2(b) = f_3(b) =
0$ is correct.

Now, let $v$ be a cutpoint of $G$ and, by inductive hypothesis,
assume that for the children of $v$ in $T$ the values of $f_1$,
$f_2$, and $f_3$ are correct according to their definition.

Consider first $f_1(v)$, i.e., the minimum weight of a
claw-deletion set of $G_v$ containing $v$. The connected
components of $G_v \setminus \{v\}$ are $\{G_b^-\}_{b\in C(v)}$.
So, it is enough to compute the minimum weight of a claw-deletion
set of each of them, and add to their sum the weight of $v$, so
$f_1(v)=w(v)+\sum_{b \in C(v)} f_2(b)$.

Consider next $f_2(v)$, i.e., the minimum weight of a
claw-deletion set of $G_v$ not containing $v$. In this case, we
have to avoid claws having $v$ as a center and the tree leaves in
three distinct blocks of $C(v)$, so all but at most two of the
blocks have to be completely contained in the set, except for
vertex $v$. For the remaining (at most two) blocks $b$, we need to
compute the minimum weight of a claw-deletion set of $G_b$ not
containing $v$ (which is $p_T(b)$). This justifies the formula
$f_2(v)=\sum_{b \in C(v)} f_3(b)$ for $|C(v)|\leq 2$, and
$f_2(v)=\min_{b_1,b_2 \in C(v)} (\sum_{b \in \{b_1,b_2\}} f_3(b) +
\sum_{b \in C(v)\setminus \{b_1,b_2\}} f_1(b))$, otherwise.

Finally, consider $f_3(v)$, for $v \neq r$, i.e., the minimum
weight of a claw-deletion set of $G_v^+$ containing neither $v$
nor all the vertices of $p_T(v) \setminus \{v\}$ (recall that
$p_T(v)$ is a block). In this case, we have to avoid claws having
$v$ as a center, one leaf in $p_T(v) \setminus \{v\}$, and two
other leaves in two distinct blocks of $C(v)$. So all but at most
one of the blocks have to be completely contained in the set,
except for vertex $v$. For the remaining block $b$, we need to
compute the minimum weight of a claw-deletion set of $G_b$ not
containing $v$ (which is $p_T(b)$). This justifies the formula
$f_3(v)= f_3(b)$ when $C(v) = \{b\}$, and $f_3(v)=\min_{b_1 \in
C(v)} (f_3(b_1) + \sum_{b \in C(v)\setminus \{b_1\}} f_1(b))$,
otherwise.

To conclude the proof, let $b$ be a node which is a block of $G$
and, by inductive hypothesis, assume that for the children of $b$
the values of $f_1$, $f_2$, and $f_3$ are correct according to
their definition.

Consider first $f_1(b)$, i.e., the minimum weight of a
claw-deletion set of $G_b^-$ containing all the vertices of $b
\setminus \{p_T(b)\}$. All the claws containing vertices of $b$
are hit by the set by definition, so it is enough to compute for
every $v$ in $C(b)$ the minimum weight of a claw-deletion set of
$G_v$ containing $v$, and adding to it the weight of all the
simplicial vertices of $b$, that is, $s(b)$. Then the formula
$f_1(b)=s(b) + \sum_{v \in C(b)} f_1(v)$ is correct.

Consider next $f_3(b)$, the minimum weight of a claw-deletion set
of $G_b$ not containing $p_T(b)$. For each $v$ in $C(b)$, either
$v$ belongs to the set, or $v$ does not belong to the set and
there is another vertex of $b$ that does not belong to the set.
So, we have to recursively compute $f_1(v)$ or $f_3(v)$,
respectively, and choose the minimum. In this case the simplicial
vertices do not belong to the minimum weight set, since the
weights are positive. This justifies the formula $f_3(b)=\sum_{v
\in C(b)} \min\{f_1(v),f_3(v)\}$.

Finally, consider $f_2(b)$, i.e., the minimum weight of a
claw-deletion set of $G_b^-$. For each $v$ in $C(b)$, there are
three possibilities: either $v$ belongs to the set, or $v$ does
not belong to the set and there is another vertex of $b \setminus
p_T(b)$ that does not belong to the set, or $v$ does not belong to
the set but any other vertex of $b \setminus p_T(b)$ belongs to
the set. We have to consider the third possibility for every $v$
in $C(b)$, adding $f_2(v)$ to the sum of $f_1(v')$ for every other
$v'$ in $C(b)$, and in that case adding also $s(b)$. For the first
two possibilities, the situation is similar to the one in the
computation of $f_3(b)$. This justifies the formula $f_2(b)=\min
\{ \sum_{v\in C(v)} \min\{f_1(v),f_3(v)\}, \min_{v_1 \in C(v)}
(s(b)+f_2(v_1) + \sum_{v\in C(v)\setminus\{v_1\}} f_1(v))\}$.
\qed\end{proof}

In Theorem~\ref{theo:timeBlockGraph} we analyze the time to
compute $f(r)$.

\begin{theorem}\label{theo:timeBlockGraph}
Let $G$ be a weighted connected block graph with $n$ vertices.
Given a block-cutpoint-tree of $G$, the minimum weight of a
claw-deletion set of $G$ can be determined in $\mathcal{O}(n)$
time.
\end{theorem}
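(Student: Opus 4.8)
The plan is to leave correctness aside (it is Theorem~\ref{theo:block}) and argue purely about running time: I would show that the bottom-up evaluation of $f_1,f_2,f_3$ spends, at each node of the given block-cutpoint-tree $T$, time proportional to that node's number of children, and then bound $|V(T)|$ by $\mathcal{O}(n)$. First I would fix a node $x$ of $T$ (a cutpoint $v$ or a block $b$) and observe that, up to additive terms computable in $\mathcal{O}(1)$, each recurrence of Theorem~\ref{theo:block} is either a single sum over $C(x)$ or a minimization over the children. The plain sums $\sum_{b \in C(v)} f_2(b)$, $\sum_{v \in C(b)} f_1(v)$, and $\sum_{v \in C(b)} \min\{f_1(v),f_3(v)\}$ are evaluated in $\mathcal{O}(|C(x)|)$ time in a single scan.

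The formulas that are not obviously linear are the minimizations, most notably $f_2(v)$ for $|C(v)|\geq 3$, whose literal evaluation ranges over all unordered pairs of children and costs $\Theta(|C(v)|^2)$; summed over a star-like $T$ this is superlinear, so this is the step that needs care and is the main obstacle. The key idea I would use is to rewrite each such minimization as a fixed base sum plus the selection of the smallest one or two correction terms. Writing $\delta(b) = f_3(b) - f_1(b)$, one has
\[ f_2(v) = \Big(\sum_{b \in C(v)} f_1(b)\Big) + \min_{b_1 \neq b_2}\big(\delta(b_1) + \delta(b_2)\big), \]
so it suffices to locate the two children of smallest $\delta$ in one linear pass; likewise $f_3(v) = \sum_{b \in C(v)} f_1(b) + \min_{b_1} \delta(b_1)$, and the second alternative defining $f_2(b)$ equals $s(b) + \sum_{v \in C(b)} f_1(v) + \min_{v_1}\big(f_2(v_1) - f_1(v_1)\big)$. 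Each of these needs only the minimum, or the two minima, of a list of differences, found in $\mathcal{O}(|C(x)|)$ time. Hence every node $x$ is processed in $\mathcal{O}(1 + |C(x)|)$ time, and summing over $T$ gives $\mathcal{O}\big(\sum_{x \in V(T)}(1 + |C(x)|)\big) = \mathcal{O}(|V(T)| + |E(T)|) = \mathcal{O}(|V(T)|)$, since $T$ is a tree.

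Finally I would bound $|V(T)| = \mathcal{O}(n)$. The cutpoint nodes number at most $n$, one per vertex of $G$. For the block nodes I would invoke the classical identity $\sum_{i}(|V(b_i)| - 1) = n - 1$ taken over the blocks $b_i$ of a connected graph (equivalently, a direct injection of blocks into $V(G)$); since each block has at least two vertices, the number of blocks is at most $n-1$, whence $|V(T)| \leq 2n - 1$. The same identity yields $\sum_i |V(b_i)| \leq 2n - 2$, so the preprocessing that computes every $s(b)$ by one scan of each block's vertex set also fits in $\mathcal{O}(n)$, and (as the recurrences never inspect edges of $G$) the bound is in $n$ alone rather than $n+m$. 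Combining the per-node estimate with $|V(T)| = \mathcal{O}(n)$ delivers the claimed $\mathcal{O}(n)$ running time, the crux being the replacement of the brute-force pair search in $f_2(v)$ by the $\delta$-selection so that the work at each cutpoint stays linear in its degree.
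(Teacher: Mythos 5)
Your proposal is correct and follows essentially the same route as the paper's proof: the paper likewise evaluates the recurrences bottom-up in $\mathcal{O}(|C(x)|)$ time per node by precomputing the differences $f_2(v)-f_1(v)$ and $f_3(b)-f_1(b)$ and selecting the minimizers from those, then bounds $|V(T)|$ by $\mathcal{O}(n)$. Your write-up is in fact slightly more explicit than the paper's on the two points it leaves implicit, namely picking the \emph{two} smallest values of $f_3(b)-f_1(b)$ to handle the pair minimization in $f_2(v)$, and justifying $|V(T)|=\mathcal{O}(n)$ via the block-count identity.
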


\begin{proof}
If the graph $G$ is complete, the weight is zero. Otherwise, we
root the given block-cutpoint-tree $T$ of $G$ at a cutpoint $r$ of
$G$. Notice that $|V(T)|$ and $|E(T)|$ are $\mathcal{O}(n)$.

Then we compute bottom-up the functions $f_1$, $f_2$, $f_3$. The
computation for a leaf $b$ (recall that leaves of $T$ are blocks
of $G$) takes $\mathcal{O}(|b|-1)$ time. Notice that $|C(b)|$ is
also $\mathcal{O}(|b|-1)$ for a block $b$ of $G$ which is not a
leaf. Thus, the computation of $f_1(b)$ and $f_3(b)$ is also
$\mathcal{O}(|b|-1)$. We can compute (as a fourth function) the
difference $f_2(v)-f_1(v)$ for every cutpoint $v$ of $G$. So, for
the computation of $\min_{v_1 \in C(v)} (s(b)+f_2(v_1) +
\sum_{v\in C(v)\setminus\{v_1\}} f_1(v))$ we simply choose as
$v_1$ the vertex $v$ minimizing $f_2(v)-f_1(v)$. Therefore the
computation of $f_2(b)$ can be also done in $\mathcal{O}(|b|-1)$
time.

For the vertices $v$ which are cutpoints of $G$, we can compute
(as a fourth function) the difference $f_3(b)-f_1(b)$ for every
block $b$ of $G$. In this way, we can compute each of $f_1(v)$,
$f_2(v)$, and $f_3(v)$ in $\mathcal{O}(|C(v)|)$ time.

The whole complexity of the algorithm is then
$\mathcal{O}(|V(T)|+|V(G)|) = \mathcal{O}(n)$. \qed\end{proof}

Recall that a block-cutpoint-tree of a connected graph $G$ with
$n$ vertices and $m$ edges can be computed in $\mathcal{O}(n+m)$
time, as well as the connected components of a graph. This implies
\\

\noindent \textbf{Corollary~\ref{cor:timeBlockGraph}.} \emph{Let
$G$ be a weighted block graph with $n$ vertices and $m$ edges. The
minimum weight of a claw-deletion set of $G$ can be determined in
$\mathcal{O}(n+m)$ time.}

\paragraph*{\bf Proof of Running Time of Theorem~\ref{theo:cfvdTreewidth}.}
\emph{\textsc{Weighted Claw-free Vertex Deletion} can be solved in linear time on graphs with bounded treewidth. More precisely, there is a $2^{\mathcal{O}(k^2)} \cdot n$-time algorithm to solve \textsc{Weighted Claw-free Vertex Deletion} on $n$-vertex graphs $G$ with treewidth at most $k$.
}

\begin{proof}
We analyze the time to compute $\cdn_w[r,
\emptyset,\emptyset,\emptyset,\emptyset,\emptyset]$.
%
%Now, we proceed with the running time analysis of the proposed algorithm.
Since $tw(G) \leq k$, then $|X_t| = \mathcal{O}(k)$, for every node $t \in V(T)$.
For every leaf node $t$, function runs in constant time.

Let $t$ be an introduce node. Functions of Formulas (2.1) and
(2.3) run in constant time. Function (2.2) requires
$\mathcal{O}(k^{2.3728639})$ time~\cite{itai1978finding} for
checking if $N_{X_t \setminus S}(v) \text{ does not induce a }
\overline{K}_3$, $\mathcal{O}(|X_t \times X_t|) =
\mathcal{O}(k^2)$ for checking $\text{for every } (x,y) \in Z,$ if
$vx\in E(G) \mbox{ or } vy \notin E(G)$, $\mathcal{O}(|C|) =
\mathcal{O}(k)$ for checking if $N_{X_t}(v) \cap C =
\emptyset$, and $\mathcal{O}(|X_t \times X_t|) = \mathcal{O}(k^2)$
for the final condition. Such steps are executed for every
partition $S \cup A \cup B \cup C$ of $X_t$, which has $4^{\mathcal{O}(k)}$ possibilities, and for every $Z \subseteq X_t \times X_t$, which leads to $2^{\mathcal{O}(k^2)}$ choices of $Z$. Since the first is dominated by the latter, we obtain that computing $\cdn_w$ for an
introduce node requires $2^{\mathcal{O}(k^2)}$ time.

Let $t$ be a forget node. Formula (3.1) runs in $\mathcal{O}(1)$.
The minimum value asked for Formula (3.2) is obtained by checking every $(S,A',B',C',Z')$, which is bounded by the size of the power set of $X_t \times X_t$, $2^{\mathcal{O}(k^2)}$. Since all steps are executed for every partition $S \cup A \cup B \cup C$ of $X_t$ and for every $Z \subseteq X_t \times X_t$, the time required for a forget node $t$ is $2^{\mathcal{O}(k^2)}$.

Finally, let $t$ be a join node. Let $S \cup A \cup B \cup C$ be a
partition of $X_t$, and $Z \subseteq X_t \times X_t$. The value
asked for Formula (4) is obtained by the minimum sum of $\cdn_w$
in $t'$ and in $t''$, among all possibilities of $(A',B',C',Z')$
and $(A'',B'',C'',Z'')$, where the pair must satisfy (4). This
leads to a running time of $2^{\mathcal{O}(k^2)} \cdot
2^{\mathcal{O}(k^2)}\cdot \mathcal{O}(k)$. Those steps are
executed for every partition $S \cup A \cup B \cup C$ of $X_t$ and
for every $Z \subseteq X_t \times X_t$. Hence, the total running
time for computing $\cdn_w$ for a join node $t$ is bounded by
$2^{\mathcal{O}(k^2)}$.

Since the time to compute each entry of the table is upper bounded
by $2^{\mathcal{O}(k^2)}$ and the table has size
$2^{\mathcal{O}(k^2)} \cdot n$, the algorithm can be performed in
$2^{\mathcal{O}(k^2)} \cdot n$ time. This implies linear-time
solvability for graphs with bounded treewidth. \qed\end{proof}

\end{document}